\title{Strong link between BWT and XBW via Aho-Corasick automaton and applications to Run-Length Encoding}
\titlerunning{From \BWT to \XBW, via Aho-Corasick, and back}
\author{Bastien Cazaux}{Department of Computer Science, University of Helsinki, Helsinki, Finland\\ L.I.R.M.M., CNRS, Universit\'e Montpellier, Montpellier, France}{bastien.cazaux@cs.helsinki.fi}{}{}
\author{Eric Rivals}{L.I.R.M.M., CNRS, Universit\'e Montpellier, Montpellier, France\\ Institute of Computational Biology, Montpellier, France}{rivals@lirmm.fr}{}{}
\authorrunning{B. Cazaux and E. Rivals}
\subjclass{\ccsdesc[100]{Mathematics of computing~Discrete mathematics, Theory of computation~Randomness, geometry and discrete structures}}
\keywords{Data Structure --- Algorithm --- Aho-Corasick Tree --- BWT --- XBW}
\newcommand{\eg}{\emph{e.g.}}
\newcommand{\ie}{i.e.\xspace}
\newcommand{\etal}{\textit{et al.}\xspace}
\newcommand{\iif}{iff\xspace}
\newcommand{\taille}[1]{\vert #1 \vert}
\newcommand{\norme}[1]{\left\lVert #1 \right\rVert}
\newcommand{\acro}[1]{\ensuremath{\mathtt{#1}}\xspace}
\newcommand{\BWT}{\acro{BWT}}
\newcommand{\BWTs}{\acro{BWTs}}
\newcommand{\LCP}{\acro{LCP}}
\newcommand{\AC}{\acro{AC}}
\newcommand{\ACT}{\acro{ACT}}
\newcommand{\ACFL}{\acro{ACFL}}
\newcommand{\SA}{\acro{SA}}
\newcommand{\Rank}{\acro{rank}}
\newcommand{\Select}{\acro{select}}
\newcommand{\LF}{\acro{LF}}
\newcommand{\LRS}{\acro{LRS}}
\newcommand{\PA}{\acro{PA}}
\newcommand{\Decomp}{\acro{Decomp}\_{\BWT}}
\newcommand{\Decompb}{\acro{Decomp}\_{\XBW}}
\newcommand{\XBW}{\acro{XBW}}
\newcommand{\XBWT}{\acro{XBWT}}
\newcommand{\XBWL}{\acro{XBWL}}
\newcommand{\Prefix}{\acro{Prefix}}
\newcommand{\Suffix}{\acro{Suffix}}
\newcommand{\inter}[2]{[\![#1,#2]\!]}
\newcommand{\mpb}{\acro{Min}-\acro{Permutation}-\acro{BWT}}
\newcommand{\mpx}{\acro{Min}-\acro{Permutation}-\acro{XBWT}}
\newcommand{\mpt}{\acro{Min}-\acro{Permutation}-\acro{Table}}
\newcommand{\BWD}{\acro{BWD}}
\newcommand{\XBWD}{\acro{XBWD}}
\begin{document}
\maketitle

\begin{abstract}
The boom of genomic sequencing makes compression of set of sequences inescapable. This underlies the need for multi-string indexing data structures that helps compressing the data. The most prominent example of such data structures is the Burrows-Wheeler Transform (BWT), a reversible permutation of a text that improves its compressibility. A similar data structure, the eXtended Burrows-Wheeler Transform (XBW), is able to index a tree labelled with alphabet symbols. 
A link between a multi-string BWT and the Aho-Corasick automaton has already been found and led to a way to build a XBW from a multi-string BWT. We exhibit a stronger link between a multi-string BWT and a XBW by using the order of the concatenation in the multi-string. This bijective link has several applications: first, it allows to build one data structure from the other; second, it enables one to compute an ordering of the input strings that optimises a Run-Length measure (i.e., the compressibility) of the BWT or of the XBW.

  
\end{abstract}

\section{Introduction}
 A seminal, key data structure, which was used for searching a set of words in a text, is the Aho-Corasick (AC) automaton \cite{Aho1975} Aho-Corasick automaton. Its states form a tree that indexes all the prefixes of the words, and each node in the tree is equipped with another kind of arc, called a Failure Link. A failure link of a node/prefix $v$ points to the node representing the largest proper suffix of $v$ in the tree. In a way, the Aho-Corasick automaton can be viewed as a multi-string indexing data structure.

 In the early 1990, the Burrows-Wheeler Transform (\BWT) of a text $T$, which is a reversible permutation of $T$, was introduced for the sake of compressing a text. Indeed, the \BWT permutation tends to groups identical symbols in runs, which favours compression~\cite{Burrows94}. However, the \BWT can also be used as an index for searching in $T$, using the Backward Search procedure \cite{FerraginaM05}. In fact, the \BWT of $T$ is the last column of a matrix containing all cyclic-shifts of $T$ sorted in lexicographical order. As sorting the cyclic shifts of $T$ is equivalent to sorting its suffixes, there exists a natural link between the Suffix arrays of $T$ and the \BWT of $T$.
 Starting in 2005, the radical increase in textual data and in biological sequencing data raise the need for multi-string indexes. In the multi-string case, all input strings are concatenated, separated by a termination symbol (which does not belong to the alphabet), and then indexed in a traditional indexing data structure (\eg, a suffix array or FM-index). Such multi-string indexes are heavily exploited in bioinformatics: first, to index all chromosomes of a genome \cite{bwa}, or a large collection of similar genomes, which allows aligning sequencing reads simultaneously to several reference genomes \cite{MakinenNSV09}, or second, to store and mine whole sets of raw DNA/RNA sequencing reads for the purpose of comparing biological conditions or of identifying splice junctions in RNA, etc \cite{Cox-etal-12,HoltM14}. In fact, managing compressed and searchable read data sets is now crucial for bioinformatic analyses of such data.
 
 Initially viewed as a simple extension of single-string \BWT construction, the efficient construction of multi-string \BWT is not trivial and has been investigated per se. Bauer \etal proposed a lightweight incremental algorithm for their construction \cite{Bauer-etal-13} among others. Then, Holt \etal devised algorithms for directly merging several, already built multi-string \BWTs efficiently \cite{HoltM14,HoltM14b}, which has been recently improved to simultaneously build the companion \emph{Longest Common Prefix} (\LCP) table \cite{EgidiM17} or to scale up to terabyte datasets \cite{Siren16}.
 
The notion of \BWT has been extended into the \XBW to index trees whose arcs are labelled by alphabet symbols~\cite{FerraginaLMM09}. The \XBW, takes the form two arrays, which compactly represent the tree and offer navigational operations.

Recently, Gagie et al. propose the notion of Wheeler graphs to subsume several variants of the \BWT, including the \XBW of a trie for a set of strings~\cite{GagieMS17}.
The relation between a multi-string \BWT and the \XBW representation of the Aho-Corasick automaton has already been studied and exploited. Hon et al. first use the \XBW representation of the Aho-Corasick trie to speed-up dictionary matching~\cite{HonKSTV13} building up on~\cite{Belazzougui10}. Manzini gave an algorithm that computes the failure links for the trie using the Suffix Array and \LCP tables, and an algorithm to build the \XBW of the trie with failure links from the multi-string \BWT \cite{Manzini16}. However, none these establish a bijective link between a multi-string \BWT and the \XBW of the Aho-Corasick automaton. To generalize these results, one need to consider the order in which the strings are concatenated for form the multi-string. This idea enables us to exhibit a bijection between a multi-string \BWT and the \XBW of the Aho-Corasick automaton, which allows building one structure from the other in either direction (from \BWT to \XBW or from \XBW to \BWT). Finally, we exploit this bijection between the \BWT and the \XBW to find an optimal string order that maximises a Run-Length Encoding (\ie, the compressibility) of these two data structures.

\section{Notation}

Let $i$ and $j$ be two integers. The \emph{interval} $\inter{i}{j}$ is the set of all integers between $i$ and $j$. An \emph{integer interval partition} of $\inter{i}{j}$ is a set of intervals $\{ \inter{i_1}{j_1}, \ldots \inter{i_n}{j_n}\}$ such that $i_1 = i$, $j_n = j$, and for all $k \in \inter{1}{n-1}$, $j_k+1 = i_{k+1}$. We define also the order $<$ on intervals such that for two intervals $u = \inter{i}{j}$ and $v = \inter{i'}{j'}$, $u < v$ \iif $j < i'$.
Let $E$ be a finite set and let $\#(E)$ denote its cardinality.
A \emph{permutation} of  $E$ is an automorphism of $E$. A permutation $\sigma$ of $E$ is said to be \emph{circular} \iif for all $i$ and $j \in E$, there exists a positive integer $k$ such that $\sigma^k (i) = j$. For a circular permutation $\sigma$ of $E$ and an element $e\in E$, we denote by $\overline{\sigma_e}$ the function from $\inter{1}{\#(E)}$ to $E$ such that for all $i \in \inter{1}{\#(E)}$, $\overline{\sigma_e}(i) = {\sigma}^{(i-1)}(e)$. If $E$ is totally ordered by $<$, we define $<_{\sigma}$ the order on $E$ such that $e <_{\sigma} f \text{ \textbf{\iif} } \sigma(e) < \sigma(f)$ for any $e, f \in E$.

Let $\Sigma$ be a finite alphabet. A \emph{string} $w$ of length $n$ over $\Sigma$ is a sequence of symbols $w[1]\ldots w[n]$ where $w[i] \in \Sigma$ for all $i \in \inter{1}{n}$. $\Sigma^{\ast}$ is the set of all the strings of $\Sigma$.
The \emph{length} of a string $w$ is denoted by $\taille{w}$. A \emph{substring} of $w$ is written as $w[i,j] = w[i]\ldots w[j]$. A \emph{prefix} of $w$ is a substring which begins $w$ and a \emph{suffix} of $w$ is a substring which ends $w$. 
The \emph{reverse} of a string $w$, denoted by $\overleftarrow{w}$, is the string $w[n]\ldots w[2]w[1]$. We define the \emph{lexicographic order} $<$ on strings as usual.

Let $S$ be a set of strings. The norm of $S$, denoted $\norme{S}$, is the sum of the length of strings of $S$. Let $\Prefix(S)$, (respectively $\Suffix(S)$) denote the set of all prefixes (resp. all suffixes) of strings of $S$. We denote by $\overleftarrow{S}$ the set of all reverse strings of strings of $S$.

An \emph{ordered set of strings} $P$ is a pair $(S,\sigma)$ where $S$ is a set strings in lexicographic order, and $\sigma$ a circular permutation of $S$. We denote by $P.S$ the set of strings $S$ and by $P.\sigma$ the circular permutation $\sigma$. We denote by $\overleftarrow{P}$ the pair $(\overleftarrow{P.S},P.\sigma)$. 

Let $\mathcal{T}$ be a tree and $u$ be a node of $\mathcal{T}$. Let  $\perp$ denote the root of $\mathcal{T}$. We denote by $\acro{Parent}_{\mathcal{T}} (u)$ the parent of $u$ in $\mathcal{T}$, by $\acro{Children}_{\mathcal{T}} (u)$ the set of children of $u$ in $\mathcal{T}$, and by $\acro{Leaves}_{\mathcal{T}} (u)$ the set of leaves in the subtree of $u$ in $\mathcal{T}$. Let $v$ be a leaf of $\mathcal{T}$; we denote by $\mathcal{T}(v)$ the subtree of $\mathcal{T}$ containing all nodes comprised between $\perp$ and $v$ included. As for a leaf $v$ in the subtree of $u$ in $\mathcal{T}$, $\#(\acro{Children}_{\mathcal{T}(v)} (u)) = 1$, we denote by $\acro{Child}_{\mathcal{T}(v)} (u)$ the unique element of $\acro{Children}_{\mathcal{T}(v)} (u)$.
Let $\prec$ be a total order on $\acro{Leaves}_{\mathcal{T}} (\perp)$. Then,  for any node $v$ of $\mathcal{T}$, $\prec$ also is a total order on $\acro{Leaves}_{\mathcal{T}} (v)$. We extend $\prec$ to the set $\acro{Children}_{\mathcal{T}} (v)$ for any node $v$ of $\mathcal{T}$ as follows: for any $x,y$ in $\acro{Children}_{\mathcal{T}} (v)$, $x \prec y \text{ \textbf{iif} } \min_{x' \in \acro{Leaves}_{\mathcal{T}} (x)} x' \prec  \min_{y' \in \acro{Leaves}_{\mathcal{T}} (y)} y'$.

\begin{figure}[ht]
  \centering
    \includegraphics[scale=.91]{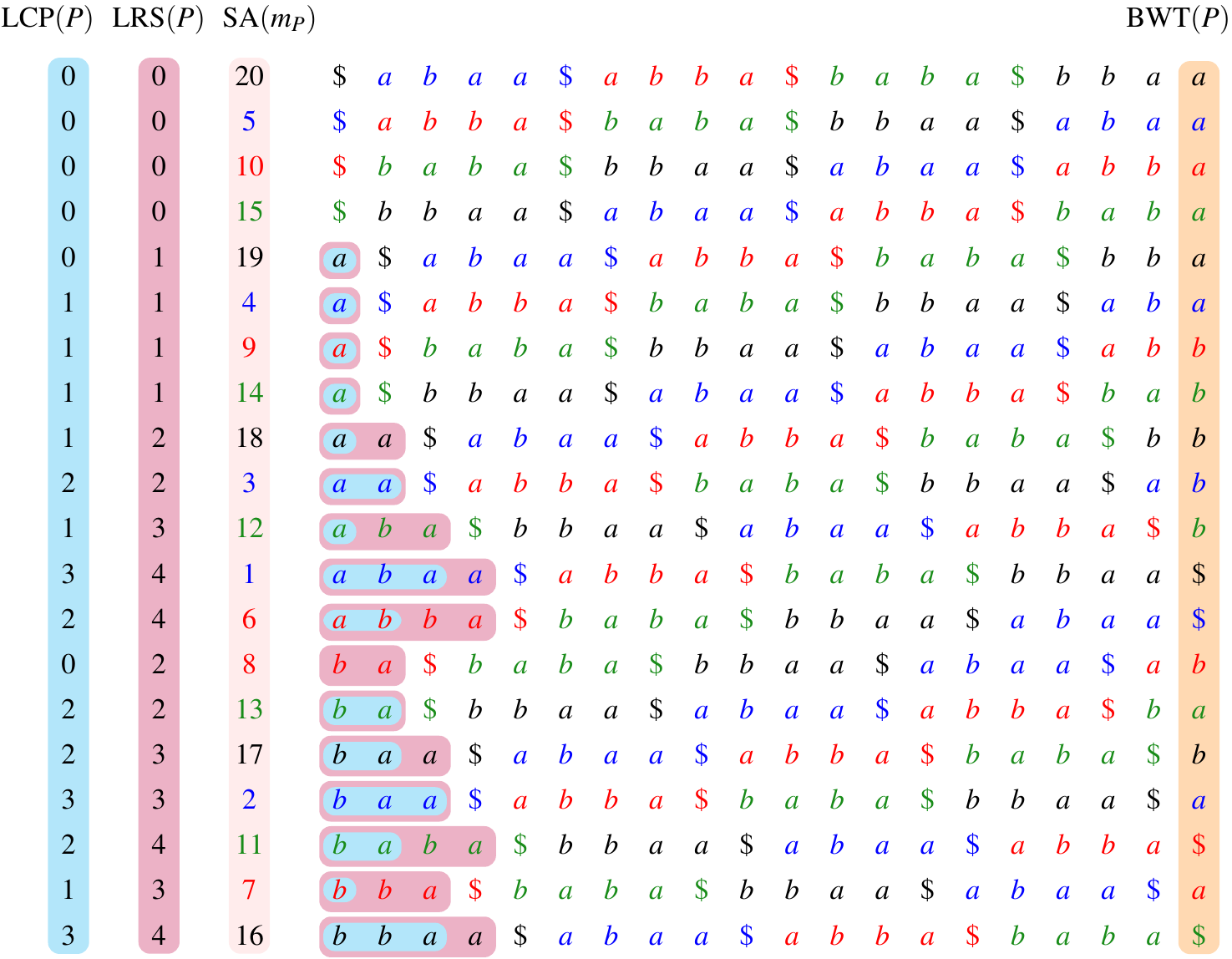}
  \caption{Tables $\LCP(P)$, $\LRS(P)$, $\SA(m_P)$ and $\BWT(P)$ for the running example. {One has $\LCP(P)[10] = \LRS(P)[10] = 2$ although the longest common prefix between suffixes of rank $9$ and $10$ is $aa\$ab$ of length $5$ (\ie, although $\LCP(m_P)[10] = 5$).}
\label{fig:bwt}}
\end{figure}

\section{Decomposition of the \BWT and link with Aho-Corasick automaton}\label{sec:ac}

Here, we introduce a decomposition of a multi-string \BWT that leads to exhibit a bijection with the Aho-Corasick automaton. This builds on and extends Manzini's work~\cite{Manzini16}.

\subsection{Decomposition of the \BWT's positions}

\subparagraph{BWT of a string}
Let $w$ be a string and $i$ be an integer satisfying $1 \leq i \leq \taille{w}$.
The \emph{Suffix Array} (\SA) of $w$~\cite{ManberM93}, denoted $\SA(w)$, is the array of integers that stores the starting positions of the $\taille{w}$ suffixes of $w$ sorted in lexicographic order. The \emph{Burrows-Wheeler Transform} (\BWT)~\cite{Burrows94} of $w$, denoted $\BWT(w)$, is the array containing a permutation of the symbols of $w$ which satisfies $\BWT(w)[i] = w[\SA(w)[i]-1]$ if $\SA(w)[i] > 1$, and $\BWT(w)[i] = w[\taille{w}]$ otherwise. The \emph{Longest Common Prefix} table (\LCP)~\cite{Burrows94} of $w$, denoted $\LCP(w)$, is the array of integers such that $\LCP(w)[i]$ equals the length of the longest common prefix between the suffixes of $w$ starting at positions $\SA(w)[i]$ and $\SA(w)[i-1]$ if $i>1$, and $0$ otherwise.

For any string $s$ and $c \in \Sigma$, one defines the functions denoted \Rank and \Select as follows: $\Rank_c(s,i)$ is the number of occurences of $c$ in $s[1,i]$, and $\Select_c(s,j)$ is the position of the $j^{th}$ occurence of $c$ in $s$. 
The arrays $\BWT(w)$ and $\LCP(w)$ can be computed in $O(\taille{w})$ time~\cite{Burrows94,FerraginaM05}. Simultaneously, one can compute \Rank and \Select for $\BWT(w)$ at no additional cost and implement them such that any \Rank or \Select query takes constant time~\cite{GrossiGV03,FoschiniGGV06,MakinenN07}. We use such state-of-the-art structure to store a \BWT.

Let $C$ denote the array of length $\Sigma$ such that $C[c]$ equals the number of symbols of $w$ that are alphabetically strictly smaller than $c$.
The \emph{Last-to-First column mapping} (\LF)~\cite{FerraginaM05} of $w$ is the function such that for any $1 \leq i \leq \taille{w}$ one has $\LF(w)[i] = C[\BWT(w)[i]] + \Rank_{\BWT(w)[i]}(\BWT(w),i)$. It is proven that $\SA(w)[\LF(w)[i]] = \SA(w)[i]-1$ if $\SA(w)[i] \geq 2$ and $\SA(w)[\LF(w)[i]] = \taille{w}$ otherwise~\cite{Burrows94,FerraginaM05}.

\subparagraph{BWT of a set of strings}
\textbf{From now on}, let $P$ be an ordered set of strings. We assume the symbol $\$$ is not in $\Sigma$ and is alphabetically smaller than all other symbols.  We denote by $m_P$ the string obtained by  concatenating the strings of $P.S$ separated by a $\$$ and following the order $\overline{P.\sigma_{s_1}}$. I.e., $ m_P := s_{\overline{P.\sigma_{s_1}}(1)} \$ s_{\overline{P.\sigma_{s_1}}(2)} \$ \ldots \$ s_{\overline{P.\sigma_{s_1}}(n)} \$ $ (See Figure~\ref{fig:mp}).

\begin{figure}[b]
  \centering
    \includegraphics[scale=1]{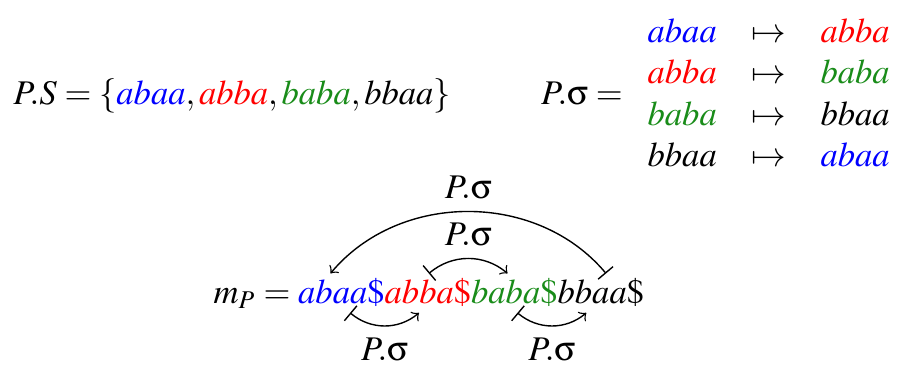}
  \caption{Runnig example with $P.S = \{abaa,abba,baba,bbaa\}$ and the corresponding $m_P$.\label{fig:mp}}
\end{figure}

We extend the notion of \BWT of a string to an ordered set of strings $P$: the \BWT of $P$ is the \BWT of the string $m_P$, i.e. $\BWT(P) = \BWT(m_P)$. We extend similarly the \LF function by setting that $\LF(P) = \LF(m_P)$.

{We define the \emph{Longest Representative Suffix} table (\LRS) of $P$ as the array of $\taille{m_P}$ integers satisfying: for any $i \in \inter{1}{\taille{m_P}}$ one has $\LRS(P)[i] = \Select_{\$}(m_P[\SA(m_P)[i]:\taille{m_P}],1) - 1$.} The entry $\LRS(P)[i]$ gives the length of the substring of $m_P$ starting at position $\SA(m_P)[i]$ up to the next $\$ $ not included. Using the \LRS table, we extend the notion of \LCP table to an ordered set of strings. For $P$, we set $\LCP(P)[i] = \min(\LCP(m_P)[i],\LRS(P)[i])$ for any $i \in \inter{1}{\taille{m_P}}$ (see Figure~\ref{fig:bwt}).

We get Lemma~\ref{le:lrs} and Proposition~\ref{prop:linear:lrs}, whose proofs can be found in Appendix.
\begin{lemma}\label{le:lrs}
  Let be $i \in \inter{1}{\taille{m_P}}$. 
  \[
    \LRS(P)[i] \quad = \quad
    \begin{cases}
      0 & \text{if } i \in \inter{1}{\#(P.S)},\\
      \LRS(P)[\LF(P)[i]]-1 & \text{otherwise}.
    \end{cases}
  \]
\end{lemma}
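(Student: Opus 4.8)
The plan is to reinterpret $\LRS(P)[i]$ as a positional quantity inside $m_P$ --- the distance from $\SA(m_P)[i]$ to the first following $\$$ --- and then to follow how this distance changes under one application of $\LF(P)$, which shifts the start of a suffix by one position. I would start with the reformulation: since $m_P$ ends with $\$$, the string $m_P[\SA(m_P)[i]:\taille{m_P}]$ always contains a $\$$, so $\Select_{\$}$ is well defined on it and $\LRS(P)[i]$ equals the least integer $\ell\ge 0$ such that $m_P[\SA(m_P)[i]+\ell]=\$$; in particular $\LRS(P)[i]=0$ precisely when $m_P[\SA(m_P)[i]]=\$$.

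For the first case, $i\in\inter{1}{\#(P.S)}$, I would show that the $\#(P.S)$ lexicographically smallest suffixes of $m_P$ are exactly the $\#(P.S)$ suffixes starting with $\$$. Indeed $\$\notin\Sigma$ is smaller than every symbol of $\Sigma$, so every $\$$-starting suffix precedes every $\Sigma$-starting suffix, and $m_P$ contains exactly $\#(P.S)$ copies of $\$$. Hence $m_P[\SA(m_P)[i]]=\$$ for $i\le\#(P.S)$, and the reformulation yields $\LRS(P)[i]=0$.

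For the second case, $i\notin\inter{1}{\#(P.S)}$, I would set $p:=\SA(m_P)[i]$ (so $m_P[p]\neq\$$ by the previous step) and use the identity $\SA(m_P)[\LF(P)[i]]=p-1$ (and $=\taille{m_P}$ when $p=1$): thus $\LF(P)[i]$ is the rank of the suffix of $m_P$ starting one position to the left of $p$, and the symbol occupying that position is $\BWT(P)[i]$. When that symbol is not $\$$, positions $p-1$ and $p$ have the same first following $\$$, so the distance from $p-1$ is exactly one more than the distance from $p$; translated back through the reformulation this reads $\LRS(P)[\LF(P)[i]]=\LRS(P)[i]+1$, which is equivalent to $\LRS(P)[i]=\LRS(P)[\LF(P)[i]]-1$, the claimed identity.

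The step I expect to be the main obstacle is precisely this transition across the $\$$ separators in the second case. One has to deal with the ranks $i$ whose suffix begins at the very start of one of the blocks of $m_P$ --- where $\BWT(P)[i]=\$$ --- and with the wrap-around $p=1$ in the $\LF$ identity, and making the recursion watertight amounts to a careful analysis of how the position of the nearest following $\$$ evolves at those block boundaries. I expect most of the technical content of the proof to be concentrated there.
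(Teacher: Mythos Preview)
Your plan coincides with the paper's own argument: both read $\LRS(P)[i]$ as the distance from $\SA(m_P)[i]$ to the next $\$$, both obtain $\LRS(P)[\LF(P)[i]]=\LRS(P)[i]+1$ under the side hypothesis $\BWT(P)[i]\neq\$$, and both prove separately that $\LRS(P)[i]=0\Leftrightarrow i\in\inter{1}{\#(P.S)}$ (the paper does this by introducing the inverse permutation $\acro{RLF}(P)$ of $\LF(P)$ and computing $\acro{RLF}(P)[\LF(P)[i]]=i$; your direct count of $\$$-starting suffixes is shorter and equally valid).

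The obstacle you single out, however, is not a technicality to be finessed but a genuine counterexample to the second branch as written. Take any $i>\#(P.S)$ with $\BWT(P)[i]=\$$ --- equivalently, any rank whose suffix begins at the first symbol of some $s_k\in P.S$ inside $m_P$. Then $\LF(P)[i]\in\inter{1}{\#(P.S)}$, so $\LRS(P)[\LF(P)[i]]=0$, whereas $\LRS(P)[i]=\taille{s_k}>0$; the formula would return $-1$. Concretely, for $P.S=\{ab\}$ and $m_P=ab\$$ one has $\SA=(3,1,2)$, $\BWT=(b,\$,a)$, $\LRS=(0,2,1)$, and at $i=2$ the right-hand side is $\LRS[\LF(2)]-1=\LRS[1]-1=-1\neq 2$. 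The paper's computation does not cover this case either: it derives $\LRS(P)[\LF(P)[i]]=\LRS(P)[i]+1$ only when $m_P[\SA(m_P)[i]-1]\neq\$$, which after the substitution $j:=\LF(P)[i]$ reads $\LRS(P)[j]=\LRS(P)[\acro{RLF}(P)[j]]+1$ for $j\notin\inter{1}{\#(P.S)}$ --- a recursion in the \emph{inverse} of $\LF$, and indeed the one the accompanying $\LRS$-computation algorithm implements (it starts at each $j\in\inter{1}{\#(P.S)}$ and walks forward along $\LF$, incrementing). So your argument is already as complete as the paper's; the ``main obstacle'' you anticipate is a misprint in the lemma's statement ($\LF$ should be $\LF^{-1}$, or the condition in the second case should be on $\LF(P)[i]$ rather than on $i$), not a gap in the method.
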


\begin{proposition}\label{prop:linear:lrs}
  Let $P$ be an ordered set of strings. Using tables $\BWT(P)$ and $\LCP(m_P)$, we can compute the tables $\LRS(P)$ and $\LCP(P)$ in linear time in $\taille{m_P}$.
\end{proposition}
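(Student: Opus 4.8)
The plan is to chain three linear-time phases. First I would recover from $\BWT(P)$ the count array $C$ (indexed by $\Sigma \cup \{\$\}$) together with a \Rank structure over $\BWT(P)$; this is the standard preprocessing, computable in $O(\taille{m_P})$ time, and it lets me evaluate $\LF(P)[i] = C[\BWT(P)[i]] + \Rank_{\BWT(P)[i]}(\BWT(P),i)$ in constant time, hence tabulate the whole array $\LF(P)$ (and, by a single pass inverting this array, the array $\LF(P)^{-1}$) in linear time. The integer $\#(P.S)$ that delimits the base case of Lemma~\ref{le:lrs} is simply the number of occurrences of $\$$ in $\BWT(P)$, read off in the same pass.

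Second, I would compute $\LRS(P)$ by evaluating the recurrence of Lemma~\ref{le:lrs}. The only real issue is the evaluation order: the value at $i$ is expressed through the value at $\LF(P)[i]$, so $i$ must be processed only once $\LF(P)[i]$ is known. The observation that makes this linear is that $\LF(P)$ is a \emph{permutation} of $\inter{1}{\taille{m_P}}$; more precisely, since $\SA(m_P)[\LF(P)[i]] = \SA(m_P)[i]-1$ (cyclically, with $1$ mapped to $\taille{m_P}$), the map $\LF(P)$ is conjugate via $\SA(m_P)$ to the cyclic shift $x \mapsto x-1$, hence is a single cycle, so walking it necessarily meets the base set $\inter{1}{\#(P.S)}$. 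It therefore suffices to pick a base position (value $0$) and walk the cycle along $\LF(P)^{-1}$, assigning to each newly visited position the value dictated by the recurrence and carrying value $0$ through the base positions; this touches every position exactly once, in $O(\taille{m_P})$ time total.

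Third, with $\LRS(P)$ now available and $\LCP(m_P)$ given as input, I would obtain $\LCP(P)$ directly from its definition, $\LCP(P)[i] = \min(\LCP(m_P)[i],\LRS(P)[i])$, by one linear scan. Summing the three phases yields the claimed $O(\taille{m_P})$ bound.

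I expect the step needing the most care to be the second one: one has to argue that the recurrence of Lemma~\ref{le:lrs} can be ``unrolled'' consistently along the orbits of $\LF(P)$ — that is, that no position is forced to receive two different values, and that every orbit reaches a base position — both of which reduce to the facts that $\LF(P)$ acts as a single cyclic shift on $\inter{1}{\taille{m_P}}$ and that the $\#(P.S)$ lexicographically smallest suffixes of $m_P$ are precisely those beginning with $\$$. The first and third phases are routine.
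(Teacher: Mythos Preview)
Your three-phase outline matches the paper exactly, and phases one and three are correct. Phase two, however, has the traversal direction reversed, and as written it does not compute $\LRS(P)$. Rewrite the recurrence of Lemma~\ref{le:lrs} in its forward form: $\LRS(P)[\LF(P)[i]]=\LRS(P)[i]+1$ whenever $\BWT(P)[i]\neq\$$. The known zeros at the base positions $\inter{1}{\#(P.S)}$ therefore propagate \emph{along} $\LF$, with the value incremented at each step, not along $\LF^{-1}$. Concretely, if you start at a base position $j$ and step to $q:=\LF(P)^{-1}[j]$, then $q$ is \emph{not} a base position (one has $\BWT(P)[q]=\$$, so $\SA(m_P)[q]$ is the first symbol of some $s_k\in P.S$), and your rule assigns $\LRS(P)[q]=\LRS(P)[j]-1=-1$, whereas the true value is $\LRS(P)[q]=\taille{s_k}>0$. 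Each stretch of your $\LF^{-1}$-walk between two consecutive base positions would thus be filled with $0,-1,-2,\ldots$ rather than the correct values. (The side condition stated in Lemma~\ref{le:lrs} is a little treacherous here: the identity $\LRS(P)[i]=\LRS(P)[\LF(P)[i]]-1$ actually holds when $\BWT(P)[i]\neq\$$, which is a different set of positions than the non-base ones; your dependency analysis took the stated condition at face value.)

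The fix is simply to walk along $\LF$ rather than $\LF^{-1}$: from each base position set the value to $0$, repeatedly step via $\text{position}\leftarrow\LF(P)[\text{position}]$ while incrementing, and stop when $\BWT(P)[\text{position}]=\$$ (equivalently, when the next $\LF$-step would land on another base position). This is precisely the paper's Algorithm~\ref{algo:lrs}. Your single-cycle observation about $\LF(P)$ then guarantees that these runs partition $\inter{1}{\taille{m_P}}$, giving the $O(\taille{m_P})$ bound. With this one change of direction your argument is complete and coincides with the paper's.
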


\subparagraph{Decomposition of a multi-string \BWT}
Let $P$ be an ordered set of strings. Let $\Decomp(P)$ be the integer interval partition of $\inter{1}{\taille{m_P}}$ such that
\[
  \inter{i}{j} \in \Decomp(P)  \quad \text{ \textbf{\iif} } \quad
  \begin{cases}
    \LCP(P)[k] \neq \LRS(P)[k], & \text{ for } k \in \{i,j+1\} \\
    \LCP(P)[k] = \LRS(P)[k], & \text{ for } k \in \inter{i+1}{j}.
  \end{cases}
\]
We define $\acro{Dec\_Pre}$ the function from $\Decomp(P)$ to $\Sigma^{\ast}$ such that for $u =\inter{i}{j} \in \Decomp(P)$, 
\[
  \acro{Dec\_Pre}[u]  \; = \; \overleftarrow{m_P[\SA(m_P)[i]:\SA(m_P)[i]+\LRS(P)[i]-1]}.
\]

\begin{proposition}\label{prop:bij:decomp:prefix}
  $\acro{Dec\_Pre}$ is a bijection between $\Decomp(P)$ and $\Prefix(\overleftarrow{P.S})$.
\end{proposition}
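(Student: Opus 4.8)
The plan is to read every interval of $\Decomp(P)$ as a maximal block of consecutive ranks in $\SA(m_P)$ whose suffixes share one common \emph{representative suffix} --- the prefix of the suffix up to but not including its first $\$$ --- to observe that $\acro{Dec\_Pre}[u]$ is precisely the reverse of this common representative, and then to show that, as $u$ runs over $\Decomp(P)$, these representatives run bijectively over $\Suffix(P.S)$. Post-composing with the reversal bijection from $\Suffix(P.S)$ onto $\Prefix(\overleftarrow{P.S})$ then yields the statement.

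The technical core is the equivalence: for $i \in \inter{2}{\taille{m_P}}$, one has $\LCP(P)[i] = \LRS(P)[i]$ \iif the suffixes of $m_P$ of ranks $i-1$ and $i$ have the same representative suffix. Since $\LCP(P)[i] = \min(\LCP(m_P)[i],\LRS(P)[i])$, this equality is equivalent to $\LCP(m_P)[i] \geq \LRS(P)[i] =: L$. If the two suffixes share a representative $w$, then both begin with $w\$$, so $\LCP(m_P)[i] \geq \taille{w} = L$. Conversely, suppose $\LCP(m_P)[i] \geq L$. The rank-$i$ suffix begins with its representative --- a $\$$-free word of length $L$ --- followed by $\$$; the rank-$(i-1)$ suffix agrees with it on the first $L$ symbols, which are therefore $\$$-free, while, as a nonempty suffix of $m_P$, its last symbol is $\$$, so its length exceeds $L$; and since it is lexicographically no larger than the rank-$i$ suffix while $\$$ is the least symbol, its $(L{+}1)$-st symbol is forced to equal $\$$. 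Hence its representative is exactly those first $L$ symbols, namely $w$. I expect this equivalence to be the main obstacle, since it requires carefully combining the minimality of $\$$ with length bounds on the suffixes involved (including the degenerate case $L=0$, where the representative is empty).

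Granting this equivalence, and reading the conditions defining $\Decomp(P)$ as automatically satisfied at $k=1$ and $k=\taille{m_P}+1$ (the standard sentinel convention, needed for $\Decomp(P)$ to be a partition), those conditions say exactly that the intervals of $\Decomp(P)$ are the maximal runs of consecutive ranks whose suffixes all carry a single common representative, say $w_u$ for the interval $u = \inter{i}{j}$; moreover $\acro{Dec\_Pre}[u] = \overleftarrow{m_P[\SA(m_P)[i]:\SA(m_P)[i]+\LRS(P)[i]-1]} = \overleftarrow{w_u}$, because the displayed factor is by definition the representative of the rank-$i$ suffix. It remains to check that $u \mapsto w_u$ is a bijection onto $\Suffix(P.S)$. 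Every representative belongs to $\Suffix(P.S)$: a suffix of $m_P$ starting with $\$$ has empty representative, and the empty word is a suffix of every string of $P.S$, while a suffix starting inside some $s_k \in P.S$ has as representative the corresponding nonempty suffix of $s_k$. Conversely, fix $w \in \Suffix(P.S)$; writing $s_k = uw$ for suitable $k$ and $u \in \Sigma^{\ast}$ shows that $w\$$ occurs in $m_P$, so $w$ is attained. Since $w$ contains no $\$$, the suffixes of $m_P$ with representative $w$ are exactly those having $w\$$ as a prefix, which by the standard suffix-array property (the suffixes of a string sharing a common prefix occupy a contiguous range of its suffix array) form one contiguous range; this range is flanked by suffixes without prefix $w\$$, hence without representative $w$, so it is precisely one maximal run, i.e. exactly one interval of $\Decomp(P)$. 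Therefore $u \mapsto w_u$ is a bijection $\Decomp(P) \to \Suffix(P.S)$, and composing it with the reversal bijection $\Suffix(P.S) \to \Prefix(\overleftarrow{P.S})$ gives that $\acro{Dec\_Pre}$, which sends $u$ to $\overleftarrow{w_u}$, is a bijection $\Decomp(P) \to \Prefix(\overleftarrow{P.S})$.
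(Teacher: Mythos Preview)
Your proof is correct and lands on the same conclusion as the paper, but it is organised around a different key lemma. You isolate and prove the two-sided equivalence ``$\LCP(P)[i]=\LRS(P)[i]$ \iif the rank-$(i-1)$ and rank-$i$ suffixes of $m_P$ have the same representative'', from which the intervals of $\Decomp(P)$ are immediately identified with the maximal runs of constant representative, and the bijection with $\Suffix(P.S)$ then drops out of the standard contiguity property of suffix arrays. The paper instead first proves only one direction --- a separate lemma showing that all positions inside a given interval share the same representative, argued by a case split on $\LRS(P)[k-1]$ versus $\LCP(P)[k]$ --- and then establishes injectivity and surjectivity of $\acro{Dec\_Pre}$ by two further ad hoc arguments working directly with the arrays. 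Your equivalence packages both directions at once and makes the ``maximal run'' picture explicit, which is conceptually cleaner; the price is that you must invoke a sentinel convention at $k=1$ and $k=\taille{m_P}+1$, though the paper's definition of $\Decomp(P)$ as an interval partition tacitly relies on the same convention.
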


\subsection{Link between \BWT and \AC}
\label{sec:bwt:ac}

\subparagraph{Aho-Corasick tree for a set of strings}
{The \emph{Aho-Corasick automaton} (\AC)~\cite{Aho1975} of a set of strings $S$ is a digraph whose set of nodes is the set of all prefixes of the strings of $S$. This graph is composed of two trees on the same node set. The first tree, which we called the \emph{Aho-Corasick Tree} (\ACT), has an arc from a prefix $u$ to a different prefix $v$ \iif $u$ is the longest prefix of $v$ among $\Prefix(S)$ (see Figure~\ref{fig:link:bwt:ac}). The second tree, termed \emph{Aho-Corasick Failure link} (\ACFL), has an arc from a prefix $u$ to a different prefix $v$ \iif $v$ is the longest suffix of $u$ among $\Prefix(S)$.}


By Proposition~\ref{prop:bij:decomp:prefix}, there exists an integer interval partition of $\inter{1}{\taille{m_P}}$ (i.e. \Decomp(P)) that is in bijection with the set of nodes of $\AC(\overleftarrow{P.S})$.

\begin{proposition}[See Figure~\ref{fig:link:bwt:ac}]\label{prop:iso:act}
  The graph $G_T(P) = (\Decomp(P),A_T(P))$ is isomorphic to the tree $\ACT(\overleftarrow{P.S})$, where
  \[
    A_T(P) := \{(u,v) \in \Decomp(P)^2 |\ \exists x \in u \text{ such that } \LF(P)[x] \neq 0 \text{ and } \LF(P)[x] \in v \}.
  \]
\end{proposition}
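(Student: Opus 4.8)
The plan is to prove that the vertex bijection $\acro{Dec\_Pre} : \Decomp(P) \to \Prefix(\overleftarrow{P.S})$ furnished by Proposition~\ref{prop:bij:decomp:prefix} is itself the claimed isomorphism; since it is already a bijection on vertex sets, the only thing left is to check that it carries the arc set $A_T(P)$ of $G_T(P)$ onto the arc set of $\ACT(\overleftarrow{P.S})$ and onto nothing else. First I would record the elementary shape of that arc set: because $\Prefix(\overleftarrow{P.S})$ is closed under taking prefixes, the longest proper prefix of a node $\beta \in \Prefix(\overleftarrow{P.S})$ is simply $\beta$ with its last symbol removed, so $(\alpha,\beta)$ is an arc of $\ACT(\overleftarrow{P.S})$ \iif $\beta = \alpha c$ for some $c \in \Sigma$ with $\alpha c \in \Prefix(\overleftarrow{P.S})$.

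The technical heart will be a ``one $\LF$-step $=$ appending one symbol'' statement. Fix $u \in \Decomp(P)$, write $\alpha = \acro{Dec\_Pre}[u]$, and take $x \in u$. From the definition of $\acro{Dec\_Pre}$ and the shape of the intervals of $\Decomp(P)$ --- within one interval all positions carry the same $\LRS$-value and share the same prefix of that length, which spells $\overleftarrow{\alpha}$ and is immediately followed by a $\$$ --- the suffix $m_P[\SA(m_P)[x]:]$ starts with $\overleftarrow{\alpha}\$$; consequently $u$ is exactly the suffix-array range of all suffixes of $m_P$ starting with $\overleftarrow{\alpha}\$$. Now put $c = \BWT(P)[x]$. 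Since $\SA(m_P)[\LF(P)[x]] = \SA(m_P)[x]-1$, the suffix at position $\LF(P)[x]$ starts with $c\,\overleftarrow{\alpha} = \overleftarrow{\alpha c}$, and the symbol following it is the one that followed $\overleftarrow{\alpha}$ in the suffix at $x$, namely $\$$ (equivalently, Lemma~\ref{le:lrs} gives $\LRS(P)[\LF(P)[x]] = \LRS(P)[x]+1 = \taille{\alpha}+1$); hence the suffix at $\LF(P)[x]$ starts with $\overleftarrow{\alpha c}\$$. If $c = \$$ this is a dead end, as $\alpha\$$ is no node of the tree, and this is precisely the case $\LF(P)[x] = 0$ discarded by the definition of $A_T(P)$; otherwise $c \in \Sigma$, whence $\alpha c \in \Prefix(\overleftarrow{P.S})$ and $\LF(P)[x]$ lies in the interval $v$ with $\acro{Dec\_Pre}[v] = \alpha c = \acro{Dec\_Pre}[u]\,c$.

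Both inclusions then become immediate. If $(u,v) \in A_T(P)$, take a witnessing $x \in u$ with $\LF(P)[x] \neq 0$ and $\LF(P)[x] \in v$; by the previous paragraph $c := \BWT(P)[x] \in \Sigma$ and $\acro{Dec\_Pre}[v] = \acro{Dec\_Pre}[u]\,c$, an arc of $\ACT(\overleftarrow{P.S})$. Conversely, let $(\alpha,\beta)$ be an arc of $\ACT(\overleftarrow{P.S})$ with $\beta = \alpha c$, and let $u,v$ be the intervals with $\acro{Dec\_Pre}[u] = \alpha$ and $\acro{Dec\_Pre}[v] = \beta$. Then $\overleftarrow{\beta} = c\,\overleftarrow{\alpha}$ is a suffix of some block of $m_P$; the suffix of $m_P$ at that occurrence begins with $\overleftarrow{\beta}\$$, hence lies in $v$ at some position $y$, while the suffix one position to the right begins with $\overleftarrow{\alpha}\$$, hence lies in $u$ at a position $x$ with $\SA(m_P)[x] = \SA(m_P)[y]+1$. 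The $\SA$/$\LF$ identity then yields $\LF(P)[x] = y \in v$, and $\BWT(P)[x] = c \neq \$$ forces $\LF(P)[x] \neq 0$, so $(u,v) \in A_T(P)$. Therefore $\acro{Dec\_Pre}$ is an isomorphism from $G_T(P)$ onto $\ACT(\overleftarrow{P.S})$.

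I expect the main obstacle to be the bookkeeping around the separator $\$$. One has to pin down that ``$\LF(P)[x] = 0$'' holds exactly when $\BWT(P)[x] = \$$, so that the root interval $\inter{1}{\#(P.S)}$ --- the suffixes beginning with $\$$, which $\acro{Dec\_Pre}$ sends to the empty word --- receives no arc and indeed plays the role of the root. More delicately, one has to justify rigorously the claim used throughout that a $\Decomp(P)$-interval coincides with the \emph{whole} suffix-array range of the suffixes of $m_P$ starting with $\overleftarrow{\alpha}\$$ for the corresponding node $\alpha$: contiguity because the suffixes beginning with a fixed string form a suffix-array range, and maximality because of the boundary condition $\LCP(P) \neq \LRS(P)$ at an interval's endpoints. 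Everything else is a direct consequence of Lemma~\ref{le:lrs}, the $\SA$/$\LF$ identity, and Proposition~\ref{prop:bij:decomp:prefix}.
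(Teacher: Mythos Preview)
Your proposal is correct and follows essentially the same route as the paper: both use $\acro{Dec\_Pre}$ as the vertex bijection, invoke the $\SA/\LF$ identity together with Lemma~\ref{le:lrs} to show that one $\LF$-step appends a single symbol (the paper derives $\acro{Dec\_Pre}[v] = \acro{Dec\_Pre}[u]\, m_P[\SA(m_P)[x]-1]$ via Lemma~\ref{le:value:same:decomp}, which is precisely your ``suffix at $\LF(P)[x]$ starts with $\overleftarrow{\alpha c}\$$'' claim), and for the converse both locate an occurrence of $\overleftarrow{\beta}$ in $m_P$ and read off the witnessing pair $(u,v)$. Your discussion of the $\$$ case and of why a $\Decomp(P)$-interval coincides with the full suffix-array range of $\overleftarrow{\alpha}\$$ is in fact more explicit than the paper's own proof, which relegates the latter to Lemma~\ref{le:value:same:decomp} and leaves the former implicit.
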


%
\begin{proposition}[See Figure~\ref{fig:link:bwt:ac}]\label{prop:iso:acfl}
  The graph $G_F(P) = (\Decomp(P),A_F(P))$ is isomorphic to the tree $\ACFL(\overleftarrow{P.S})$, where
  \[
    A_F(P) := \{(u,v) \in \Decomp(P)^2 |\ \big(\max_{\substack{k<i \\ \LRS(P)[k] = \min_{k-1 \leq l \leq i}(\LCP(P)[l])}}( k )\big) \in v \text{ with } u = \inter{i}{j}\}.
  \]
\end{proposition}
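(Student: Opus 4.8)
The plan is to take $\acro{Dec\_Pre}$ of Proposition~\ref{prop:bij:decomp:prefix} as the vertex map and to verify that $(u,v)\in A_F(P)$ if and only if $(\acro{Dec\_Pre}[u],\acro{Dec\_Pre}[v])$ is a failure link of $\ACFL(\overleftarrow{P.S})$. Since $\ACFL(\overleftarrow{P.S})$ is a tree and every non-root node has exactly one outgoing failure link, this reduces to two things. First, the interval $u$ with left endpoint $1$ (for which $\LRS(P)[1]=0$ by Lemma~\ref{le:lrs}, so that $\acro{Dec\_Pre}[u]$ is the empty string, the root of $\ACFL(\overleftarrow{P.S})$) has no outgoing arc in $A_F(P)$, which is clear because its index set $\{k<i\}=\{k<1\}$ is empty. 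Second, for every other $u=\inter{i}{j}$, the unique $v$ with $(u,v)\in A_F(P)$ is the one for which $\acro{Dec\_Pre}[v]$ is the failure link of $\acro{Dec\_Pre}[u]$.

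For the second point I would first read off the \ACFL side through the reversal: a node of $\ACFL(\overleftarrow{P.S})$ is $\overleftarrow{z}$ for a unique $z\in\Suffix(P.S)$, and its failure link is $\overleftarrow{z'}$, where $z'$ is the longest \emph{proper} prefix of $z$ that belongs to $\Suffix(P.S)$ (the empty string is always such a candidate, so $z'$ is well defined). On the \BWT side I would make explicit a description of the intervals that is essentially already contained in Proposition~\ref{prop:bij:decomp:prefix} and Lemma~\ref{le:lrs}: setting $z_u:=\overleftarrow{\acro{Dec\_Pre}[u]}$, the interval $u=\inter{i}{j}$ is exactly the maximal range of ranks $k$ whose suffix of $m_P$ begins with $z_u\$$, and $\LRS(P)[k]=\taille{z_u}$ for all those $k$. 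Thus the ranges of $\Decomp(P)$ are, up to the reversal, the $\SA(m_P)$-intervals of the strings $z\$$ for $z\in\Suffix(P.S)$, and every rank $k$ carries a ``representative'' $z_k\in\Suffix(P.S)$ of length $\LRS(P)[k]$, with $k$ sitting in the interval attached to $z_k$.

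With this dictionary the statement reduces to the single claim that the rank $k^{\ast}$ selected in the definition of $A_F(P)$ satisfies $z_{k^{\ast}}=z'$, which I would prove in two steps. \textbf{Step (a):} for $k<i$, the defining equality relating $\LRS(P)[k]$ to the displayed minimum of $\LCP(P)$-values between $k$ and $i$ holds if and only if $z_k$ is a prefix of $z_u$ — automatically a \emph{proper} prefix, since $k<i$ excludes $z_k=z_u$. For the ``if'' direction one notes that, when $z_k$ is a prefix of $z_u$, the longest common prefix of the suffixes of ranks $k,\dots,i$ is exactly $z_k$ (it cannot be longer, since beyond position $\taille{z_k}$ the suffix of rank $i$ continues inside $z_u$ whereas the suffix of rank $k$ has $\$$ at that position), and moreover each of those suffixes begins with $z_k$, which contains no $\$$, forcing $\LRS(P)[l]\geq\taille{z_k}$ for every $l$ between $k$ and $i$; hence the minimum of $\LCP(P)=\min(\LCP(m_P),\LRS(P))$ over that range equals $\taille{z_k}=\LRS(P)[k]$. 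The ``only if'' direction runs the same argument in reverse. \textbf{Step (b):} among the $k<i$ for which $z_k$ is a prefix of $z_u$, the representatives are pairwise nested, and because $\$$ is smaller than every letter, a longer representative forces a strictly larger \SA-rank; consequently the largest such $k$ is reached precisely on the ranks whose representative is the longest proper prefix of $z_u$ lying in $\Suffix(P.S)$, namely $z'$, and the largest of those ranks lies in $v:=\acro{Dec\_Pre}^{-1}[\overleftarrow{z'}]$. Steps~(a) and~(b) together give $k^{\ast}\in v$, which is exactly what the second point above required.

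I expect the delicate part to be Step~(a): making the equality between $\LRS(P)[k]$ and the displayed minimum tight exactly on the prefixes of $z_u$ calls for handling the three tables $\LCP(m_P)$, $\LRS(P)$, $\LCP(P)$ and the behaviour of the intervals at their boundaries at the same time. The monotonicity statement that $\LRS(P)$ does not decrease along a block of consecutive suffixes sharing a common prefix that contains no $\$$ — used in both steps — is the one auxiliary fact I would isolate and establish first.
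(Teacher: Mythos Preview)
Your proposal is correct and follows essentially the same route as the paper's own proof: take $\acro{Dec\_Pre}$ as the vertex bijection, establish (your Step~(a)) that the defining constraint on $k$ is equivalent to $\acro{Dec\_Pre}$ of the interval containing $k$ being a proper suffix of $\acro{Dec\_Pre}[u]$ (equivalently, $z_k$ being a proper prefix of $z_u$), and then (your Step~(b)) use the rank monotonicity ``longer nested representative $\Rightarrow$ larger rank'' to conclude that the maximum admissible $k$ lands in the interval corresponding to the longest proper suffix, i.e.\ the failure-link target. The paper's argument is terser---it states the equivalence and the monotonicity in two short paragraphs without isolating the root case or the auxiliary $\LRS$ monotonicity you single out---but the skeleton is the same.
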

%

Finally, next theorem states how to simulate an Aho-Corasick automaton using the \BWT (as in~\cite{Manzini16}).

\begin{theorem}[See Figure~\ref{fig:link:bwt:ac}]\label{th:simulate:ac:bwt}
   Using tables $\BWT(\overleftarrow{P})$, $\LCP(\overleftarrow{P})$, $\LRS(\overleftarrow{P})$ and the functions $\LF(\overleftarrow{P})$, $\Rank$ and $\Select$, we can build a graph that is isomorphic to $\AC(P.S)$.
\end{theorem}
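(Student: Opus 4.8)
The plan is to obtain Theorem~\ref{th:simulate:ac:bwt} by instantiating Propositions~\ref{prop:bij:decomp:prefix}, \ref{prop:iso:act} and~\ref{prop:iso:acfl} at the ordered set of strings $\overleftarrow{P}$ instead of $P$, and then gluing the two resulting tree isomorphisms into a single isomorphism onto the whole Aho-Corasick automaton. The starting observation is the elementary identity $\overleftarrow{P} = (\overleftarrow{P.S}, P.\sigma)$: hence $\overleftarrow{P}.S = \overleftarrow{P.S}$ and, reversal being an involution, $\overleftarrow{\overleftarrow{P.S}} = P.S$. Proposition~\ref{prop:bij:decomp:prefix} applied to $\overleftarrow{P}$ therefore states that $\acro{Dec\_Pre}$ (evaluated with respect to $\overleftarrow{P}$) is a bijection from the integer interval partition $\Decomp(\overleftarrow{P})$ of $\inter{1}{\taille{m_{\overleftarrow{P}}}}$ onto $\Prefix(P.S)$, which is precisely the node set of $\AC(P.S)$.

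Next I would superimpose the two tree isomorphisms. Proposition~\ref{prop:iso:act} at $\overleftarrow{P}$ gives $G_T(\overleftarrow{P}) = (\Decomp(\overleftarrow{P}), A_T(\overleftarrow{P})) \cong \ACT(P.S)$, and Proposition~\ref{prop:iso:acfl} at $\overleftarrow{P}$ gives $G_F(\overleftarrow{P}) = (\Decomp(\overleftarrow{P}), A_F(\overleftarrow{P})) \cong \ACFL(P.S)$; in both cases the underlying vertex bijection may be taken to be $\acro{Dec\_Pre}$, as this is what the proofs of these two propositions actually establish. Since $\AC(P.S)$ is by definition the digraph obtained by laying the two trees $\ACT(P.S)$ and $\ACFL(P.S)$ on the common node set $\Prefix(P.S)$, the map $\acro{Dec\_Pre}$ is an isomorphism from the graph $G(\overleftarrow{P}) := (\Decomp(\overleftarrow{P}),\ A_T(\overleftarrow{P}) \cup A_F(\overleftarrow{P}))$ onto $\AC(P.S)$.

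It then remains to check that $G(\overleftarrow{P})$ is effectively computable from the listed data. The partition $\Decomp(\overleftarrow{P})$ is read off from its own definition by a single left-to-right scan comparing $\LCP(\overleftarrow{P})[k]$ with $\LRS(\overleftarrow{P})[k]$, both of which are given; recording the interval endpoints during this scan (equivalently, a $\Rank$/$\Select$-indexed bitvector on them) lets us answer, for any position, which interval of $\Decomp(\overleftarrow{P})$ contains it. For $A_T(\overleftarrow{P})$, scanning every position $x$ with $\LF(\overleftarrow{P})[x] \neq 0$ and adding the arc from the interval $u \ni x$ to the interval $v \ni \LF(\overleftarrow{P})[x]$ enumerates $A_T(\overleftarrow{P})$ exactly, using only $\LF(\overleftarrow{P})$ together with interval-membership queries. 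For $A_F(\overleftarrow{P})$, for each interval $u = \inter{i}{j}$ one evaluates $\max\{\,k < i \mid \LRS(\overleftarrow{P})[k] = \min_{k-1 \leq l \leq i} \LCP(\overleftarrow{P})[l]\,\}$ --- a range-minimum computation over $\LCP(\overleftarrow{P})$ combined with a scan of $\LRS(\overleftarrow{P})$ --- and adds the arc from $u$ to the interval containing that index. Thus $\Decomp(\overleftarrow{P})$, $A_T(\overleftarrow{P})$ and $A_F(\overleftarrow{P})$, and hence the graph $G(\overleftarrow{P}) \cong \AC(P.S)$, are built from $\BWT(\overleftarrow{P})$, $\LCP(\overleftarrow{P})$, $\LRS(\overleftarrow{P})$, $\LF(\overleftarrow{P})$, $\Rank$ and $\Select$ alone.

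The only genuinely delicate point is making sure the isomorphisms of Propositions~\ref{prop:iso:act} and~\ref{prop:iso:acfl} can be chosen to be the \emph{same} vertex map, so that their arc sets sit on one and the same node set without any relabelling; this is guaranteed once one notes that both isomorphisms are witnessed by $\acro{Dec\_Pre}$. Everything else is bookkeeping --- substituting $\overleftarrow{P}$ for $P$ in the earlier statements and observing that each array and function needed to materialise $\Decomp(\overleftarrow{P})$, $A_T(\overleftarrow{P})$ and $A_F(\overleftarrow{P})$ is among the inputs of the theorem.
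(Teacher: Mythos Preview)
Your proof is correct and follows essentially the same approach as the paper: instantiate Propositions~\ref{prop:iso:act} and~\ref{prop:iso:acfl} at $Q = \overleftarrow{P}$ and use $\overleftarrow{\overleftarrow{P.S}} = P.S$ to identify the targets with $\ACT(P.S)$ and $\ACFL(P.S)$. Your treatment is actually more thorough than the paper's two-line proof, since you explicitly justify that both isomorphisms are witnessed by the same vertex map $\acro{Dec\_Pre}$ and verify that the required graph data are computable from the listed inputs.
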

\begin{proof}
  Taking the set of strings $Q = \overleftarrow{P}$, Proposition~\ref{prop:iso:act} implies that the graph $G_T(Q) = G_T(\overleftarrow{P})$ is isomorphic to the tree $\ACT(\overleftarrow{Q.S}) = \ACT(\overleftarrow{\overleftarrow{P.S}}) = \ACT(P.S)$. Proposition~\ref{prop:iso:acfl} says that the graph $G_F(Q) = G_F(\overleftarrow{P})$ is isomorphic to the tree $\ACFL(\overleftarrow{Q.S}) = \ACFL(\overleftarrow{\overleftarrow{P.S}}) = \ACFL(P.S)$.
\end{proof}
\begin{figure}
  \centering
  \includegraphics[scale=0.7]{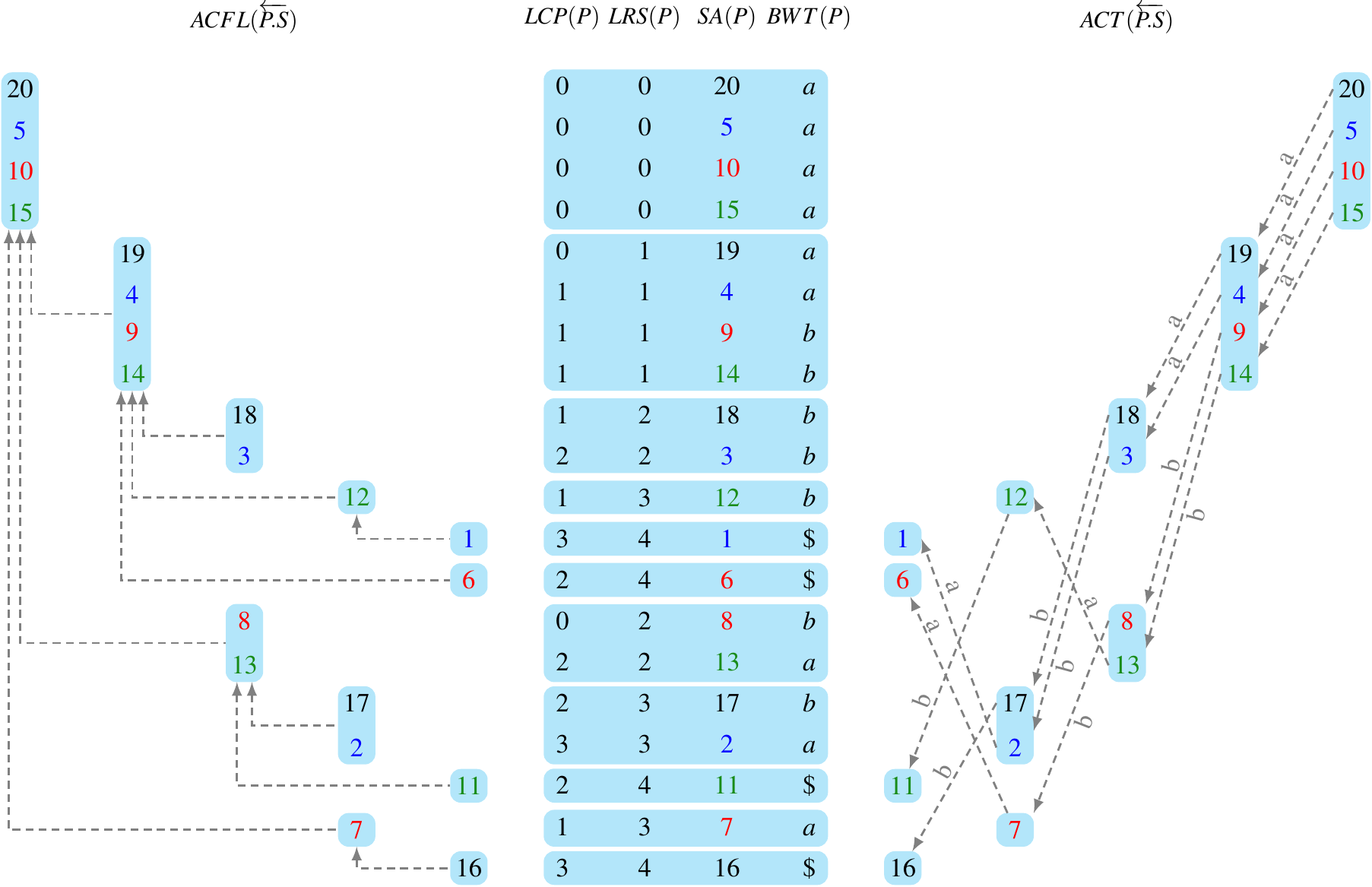}
  \caption{Link between $\BWT(P)$, $\ACT(\overleftarrow{P.S})$ and $\ACFL(\overleftarrow{P.S})$ for the running example.\label{fig:link:bwt:ac}}
\end{figure}

\begin{figure}[!ht]
  \centering
    \includegraphics[scale=0.85]{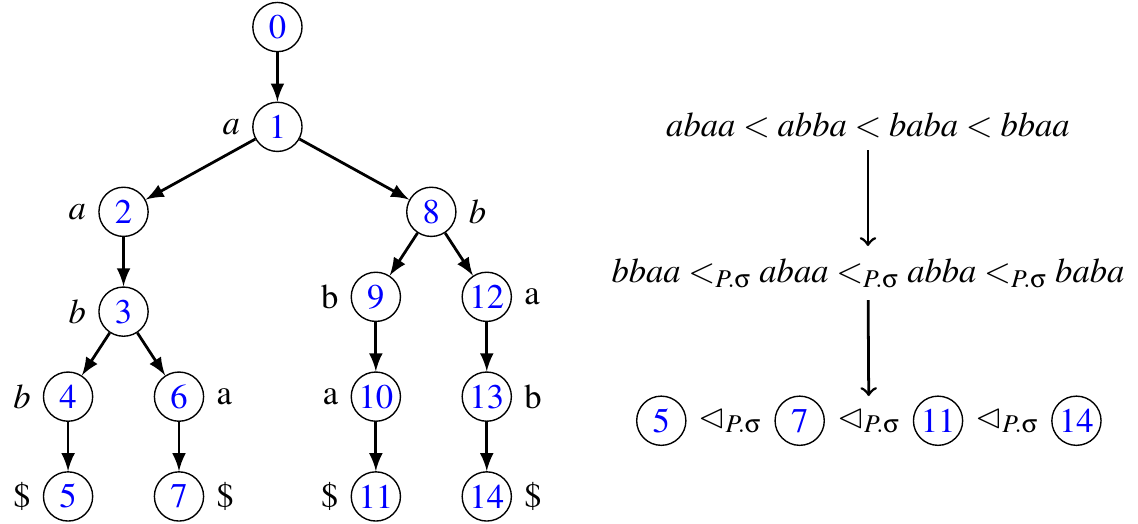}
  \caption{Tree $\ACT(\overleftarrow{P})$ for the running example, and links between the orders $<$, $<_{P.\sigma}$, and $\lhd_{P.\sigma}$.\label{fig:xbw}}
\end{figure}

\section{Link between \BWT and \XBW}
\label{sec:bwt:xbw}

In Section~\ref{sec:ac}, we gave a new proof of the relation between the Aho-Corasick automaton and the \BWT. Here, we exhibit a new (bijective) link between the \BWT and the \XBW, which takes into account the order in the multi-string  (Theorem~\ref{th:link:bwt:xbw}). This leads to both, another construction algorithm of the \XBW from the \BWT, and to a construction of the \BWT from the \XBW, and thereby extends Manzini's results (Corollary~\ref{cor:build:bwt:xbw}).

\subparagraph{XBW of a tree}
\newcommand{\ntreize}{\tikz \node[draw,circle,scale=.465]{13};}
\newcommand{\nquatre}{\tikz \node[draw,circle,scale=.465]{4};}

Let $\mathcal{T}$ be an ordered tree such that every node of $\mathcal{T}$ is labelled with a symbol from an alphabet $\Sigma$. We define the functions $\delta$ and $\pi$ on the set of nodes of $\mathcal{T}$ such that for a node $v$ of $\mathcal{T}$, $\delta(v)$ is the label of $v$, and $\pi(v)$ is the string obtained by concatenating the labels from $v$'s parent to the root of $\mathcal{T}$. Let $\prec$ be the total order between the nodes of $\mathcal{T}$ such that for $u$ and $v$ two nodes of $\mathcal{T}$, $u \prec v$ \iif $\pi(u)$ is strictly lexicographically smaller than $\pi(v)$ or $u$ is before $v$ in the order of  $\mathcal{T}$. \textbf{Example}:  With the tree of Figure~\ref{fig:xbw}, on the nodes numbered $13$ and $4$, we have $\delta(13) = b$ and $\Pi(13) = aba = \delta(12)\delta(8)\delta(1)$, and also $\delta(4) = b$ and $\Pi(4) = baa = \delta(3)\delta(2)\delta(1)$. Thus, $\ntreize \prec \nquatre$.

The \emph{Prefix Array} (\PA) of an ordered tree $\mathcal{T}$ is the array of pointers to the nodes of $\mathcal{T}$ (except the root of $\mathcal{T}$) sorted in $\prec$ order.
The \emph{eXtended Burrows-Wheeler Transform} (\XBWT)~\cite{FerraginaLMM09}\footnote{In~\cite{FerraginaLMM09}, the \emph{\XBW-transform} is defined as $\XBW(T)[i] =\langle\XBWT[i],\XBWL[i]\rangle$, for any position $i$.} of a tree $\mathcal{T}$ is an array of symbols of $\Sigma$, of length $\PA(\mathcal{T})$ such that the entry at position $i$ gives the label of the node $\PA(\mathcal{T})[i]$.
The \emph{eXtended Burrows-Wheeler Last} (\XBWL)~\cite{FerraginaLMM09}\footnotemark[\value{footnote}] of a tree $\mathcal{T}$ is the bit array of length of $\PA(\mathcal{T})$ such that $\PA(\mathcal{T})[i]$ equals $1$ if the node $\PA(\mathcal{T})[i]$ is a last child of its parent, and $0$ otherwise.

Similarly to the definition of $\Decomp(P)$ for an ordered set of strings $P$, we define $\Decompb(\mathcal{T})$ for a tree $\mathcal{T}$ of $t$ nodes as the integer interval partition of $\inter{1}{t}$ such that
\[
  \inter{i}{j} \in \Decompb(\mathcal{T}) \; \text{ \textbf{\iif} } \;
  \begin{cases}
    \XBWL(\mathcal{T})[k] = 1, & \text{ for } k \in \{i-1,j\} \\
    \XBWL(\mathcal{T})[k] = 0, & \text{ for } k \in \inter{i}{j-1}.
  \end{cases}
\]

\subparagraph{XBW of an Aho-Corasick tree}

For a set of strings $S = \{s_1,\ldots,s_n\}$, we denote by $S^{\$}$ the set $\{s_1\$,\ldots,s_n\$\}$.
Let $P$ be an ordered set of strings. We define $\ACT(P)$ as the Aho-Corasick tree of $P.S^{\$}$ equipped with the order $\lhd_{P.\sigma}$.  Indeed, $\lhd_{P.\sigma}$ is the order on the leaves satisfying: for $u$ and $v$ two leaves of $\ACT(P)$, $u \lhd_{P.\sigma} v$ \iif $\pi(u) <_{P.\sigma} \pi(v)$. We extend this order to the set of children of all nodes (See Figure~\ref{fig:xbw}). Note that $\ACT(P)$ differs from $\ACT(P.S)$, which was defined in Section~\ref{sec:bwt:ac}.

\begin{theorem}[See Figures~\ref{fig:bwt:xbw} and~\ref{fig:bwt:xbw:tree}]\label{th:link:bwt:xbw}
  There exists a bijection \acro{BWT\_XBW} between $\Decomp(P)$ and $\Decompb(\ACT(\overleftarrow{P}))$ such that for all $u \in \Decomp(P)$ with $u = \inter{i}{j}$, $\acro{BWT\_XBW}(u) = \inter{i'}{j'}$ and $z = \acro{Parent}_{\ACT(\overleftarrow{P})} (\PA(\ACT(\overleftarrow{P})[i']))$.
  \begin{itemize}
  \item Let $\{y_1,y_2,\ldots y_{\#(u)}\} = \acro{Leaves}_{\ACT(\overleftarrow{P})} (z)$ such that $y_i \lhd_{P.\sigma} y_j \Rightarrow i < j$; then
    \[
      \BWT(P)[i,j] = \delta[\acro{Child}_{\ACT(\overleftarrow{P})(y_1)} (z)]\delta[\acro{Child}_{\ACT(\overleftarrow{P})(y_2)} (z)]\ldots \delta[\acro{Child}_{\ACT(\overleftarrow{P})(y_{\#(u)})} (z)].
    \]
  \item Let $\{x_1,x_2,\ldots, x_{\#(\acro{BWT\_XBW}(u))}\} = \acro{Children}_{\ACT(\overleftarrow{P})} (z)$ such that $x_i \lhd_{P.\sigma} x_j \Rightarrow i < j$; then
    \[
      \XBWT(\ACT(\overleftarrow{P}))[i',j'] = \delta[x_1]\delta[x_2]\ldots \delta[x_{\#(\acro{BWT}\_\acro{XBW}(u))}].
    \]
  \end{itemize}
\end{theorem}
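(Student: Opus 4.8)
The plan is to build the bijection $\acro{BWT\_XBW}$ by composing the structural results already established. First I would invoke Proposition~\ref{prop:bij:decomp:prefix} to identify $\Decomp(P)$ with $\Prefix(\overleftarrow{P.S})$, and then observe that $\Prefix(\overleftarrow{P.S})$ is exactly the node set of $\ACT(\overleftarrow{P.S})$, hence (after adding the $\$$-leaves) of $\ACT(\overleftarrow{P})$. In parallel, I would show that $\Decompb(\ACT(\overleftarrow{P}))$ is naturally in bijection with the \emph{internal} nodes of $\ACT(\overleftarrow{P})$ that have at least one child: an interval $\inter{i'}{j'}\in\Decompb$ groups together, by definition of $\XBWL$, exactly the set of children of a single node $z$, so the map $\inter{i'}{j'}\mapsto z$ is a bijection onto the set of nodes having children. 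The composite $\acro{BWT\_XBW}$ is then: interval $u\in\Decomp(P)\;\mapsto\;$ its $\acro{Dec\_Pre}$-image (a node $w$ of $\ACT(\overleftarrow{P})$) $\;\mapsto\;$ the interval in $\Decompb$ whose associated node $z$ equals $w$; I would need to check along the way that $\acro{Dec\_Pre}[u]$ always does have children in $\ACT(\overleftarrow P)$, which holds because every prefix in $\Prefix(\overleftarrow{P.S})$ is extended by at least one symbol (or by $\$$) in $\ACT(\overleftarrow P)$, and that $z = \acro{Parent}_{\ACT(\overleftarrow P)}(\PA[i'])$ as claimed, which is immediate since all of $\PA[i'],\ldots,\PA[j']$ share the parent $z$.

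Next I would establish the two displayed identities, which describe how the \BWT-block and the \XBWT-block read off the labels of the relevant children. For the \XBWT identity, this is essentially the definition of $\XBWT$ together with the definition of $\Decompb$: the positions $i',\ldots,j'$ of $\PA(\ACT(\overleftarrow P))$ point precisely to the children $x_1,\ldots,x_{\#(\acro{BWT\_XBW}(u))}$ of $z$, sorted by $\prec$, which by the way $\prec$/$\lhd_{P.\sigma}$ extend to children is the same as sorting by $\lhd_{P.\sigma}$; and $\XBWT$ at position $k$ returns $\delta$ of the $k$-th node, giving $\delta[x_1]\cdots\delta[x_{\#}]$. For the \BWT identity, I would use Proposition~\ref{prop:iso:act}: the arc set $A_T(P)$ shows that applying $\LF(P)$ to a position $x$ in interval $u$ (the reversed-prefix $w = \acro{Dec\_Pre}[u]$) lands in the interval corresponding to $w$ extended by one symbol — and that symbol is exactly $\BWT(P)[x]$, since $\BWT(P)[x] = m_P[\SA(m_P)[x]-1]$ prepends a character to the suffix, i.e. appends it to the reversed prefix $w$. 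Hence the multiset of labels $\{\BWT(P)[x] : x\in u\}$ equals the multiset of first-symbols of the children of $w=z$ in $\ACT(\overleftarrow P)$. The content of the statement is then that the \emph{order} in which $\BWT(P)[i],\ldots,\BWT(P)[j]$ appears matches the order of the leaves $y_1\lhd_{P.\sigma}\cdots\lhd_{P.\sigma}y_{\#(u)}$, each contributing $\delta[\acro{Child}_{\ACT(\overleftarrow P)(y_\ell)}(z)]$.

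The main obstacle, and the part I would spend the most care on, is this last ordering claim: reconciling the lexicographic order on suffixes of $m_P$ (which governs the position of $x$ within the block $u$, and hence within $\BWT(P)[i,j]$) with the order $\lhd_{P.\sigma}$ on leaves of $\ACT(\overleftarrow P)$. The subtlety is exactly why $P.\sigma$ enters: within a block $u$, all positions share the same reversed-prefix $w$, so they are sorted according to what \emph{follows} $\SA(m_P)[x]$ in $m_P$, and when that continuation reaches a $\$$ the tie is broken by the circular order $P.\sigma$ on the concatenation — which is precisely the definition of $<_{P.\sigma}$ and therefore of $\lhd_{P.\sigma}$ via $\pi$. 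I would make this precise by showing that the map sending a position $x\in u$ to the leaf of $\ACT(\overleftarrow P)$ reached by reading $\$$ after $w$ along the appropriate path is order-preserving from $(u,<)$ to $(\acro{Leaves}_{\ACT(\overleftarrow P)}(z),\lhd_{P.\sigma})$, using Lemma~\ref{le:lrs} to track how $\LRS$ (hence the ``depth'' in the path) behaves under $\LF$, and the definition of $\LRS(P)$/$\LCP(P)$ that underlies $\Decomp(P)$. Once the two orderings are matched, $\acro{Child}_{\ACT(\overleftarrow P)(y_\ell)}(z)$ is by definition the unique child of $z$ on the path to $y_\ell$, its label is the symbol appended to $w$, i.e. prepended to the suffix at position $\SA(m_P)[\cdot]$, which is $\BWT(P)$ at that position — completing both displays and hence the theorem.
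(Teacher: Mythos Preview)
Your plan is correct and lands on the same bijection and the same two identities as the paper, but the paper's execution is somewhat different in two places.

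First, for constructing $\acro{BWT\_XBW}$, the paper does not compose $\acro{Dec\_Pre}$ with a node-to-children-interval map. Instead it lists the intervals of $\Decomp(P)$ and of $\Decompb(\ACT(\overleftarrow P))$ in interval order as arrays $T_B$ and $T_X$, observes that both have length $\#(\Prefix(\overleftarrow{P.S}))$, and argues directly that $T_B[i]$ and $T_X[i]$ both represent the $i$-th suffix of strings of $P.S$ in lexicographic order; the bijection is then simply $T_B[i]\mapsto T_X[i]$. This positional argument is shorter than your composite one and sidesteps checking that $\acro{Dec\_Pre}[u]$ always has children (though your observation that every prefix is extended by $\$$ handles this fine). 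Second, for the ordering step in the $\BWT$ identity --- where you plan to match the lexicographic order inside a block with $\lhd_{P.\sigma}$ on leaves --- the paper does not use $\LF$ or Lemma~\ref{le:lrs}. It instead introduces, for each $k$ in the block, the string $w_P[k]\in P.S$ that contains position $\SA(m_P)[k]$ (defined via $\Select_\$$ and $\Rank_\$$), asserts $w_P[i'']<_{P.\sigma}\cdots<_{P.\sigma}w_P[j'']$ from the definition of $<_{P.\sigma}$, identifies $w_P[i''+l-1]=\pi(y_l)$, and then reads off $\BWT(P)[i''+k-1]$ as the appropriate character of $\pi(y_k)$, which is $\delta$ of the child on the path to $y_k$. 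Your invocation of Proposition~\ref{prop:iso:act} for the multiset of child labels is not needed in the paper's route; the identification goes straight through $m_P[\SA(m_P)[\cdot]-1]$ and the explicit $w_P[k]$. Both approaches arrive at the same place; the paper's is a bit more concrete on the ordering and a bit less structural on the bijection.
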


\begin{figure}
  \centering
    \includegraphics[scale=0.85]{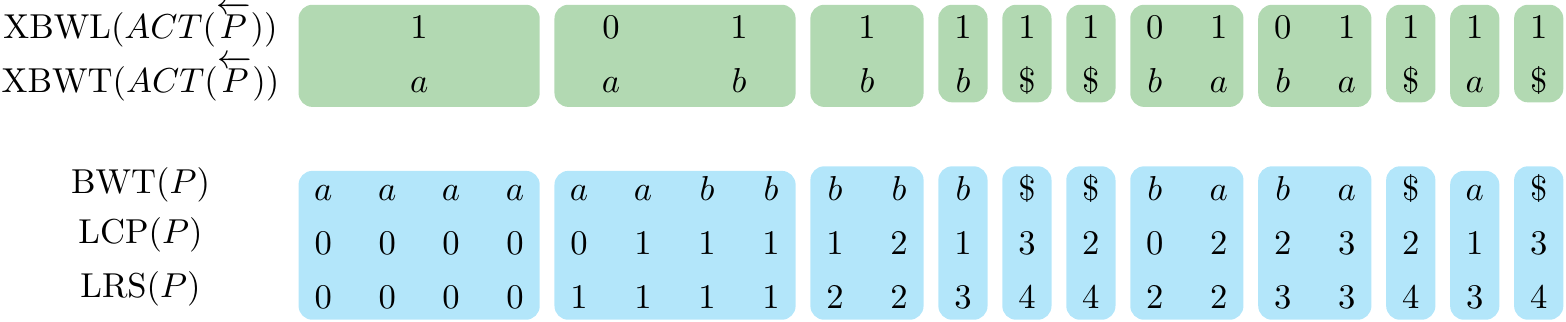}
  \caption{Link between $\Decomp(P)$ (in blue) and $\Decompb(\ACT(\overleftarrow{P}))$ (in green).\label{fig:bwt:xbw}}
\end{figure}

\begin{figure}
  \centering
    \includegraphics[scale=0.85]{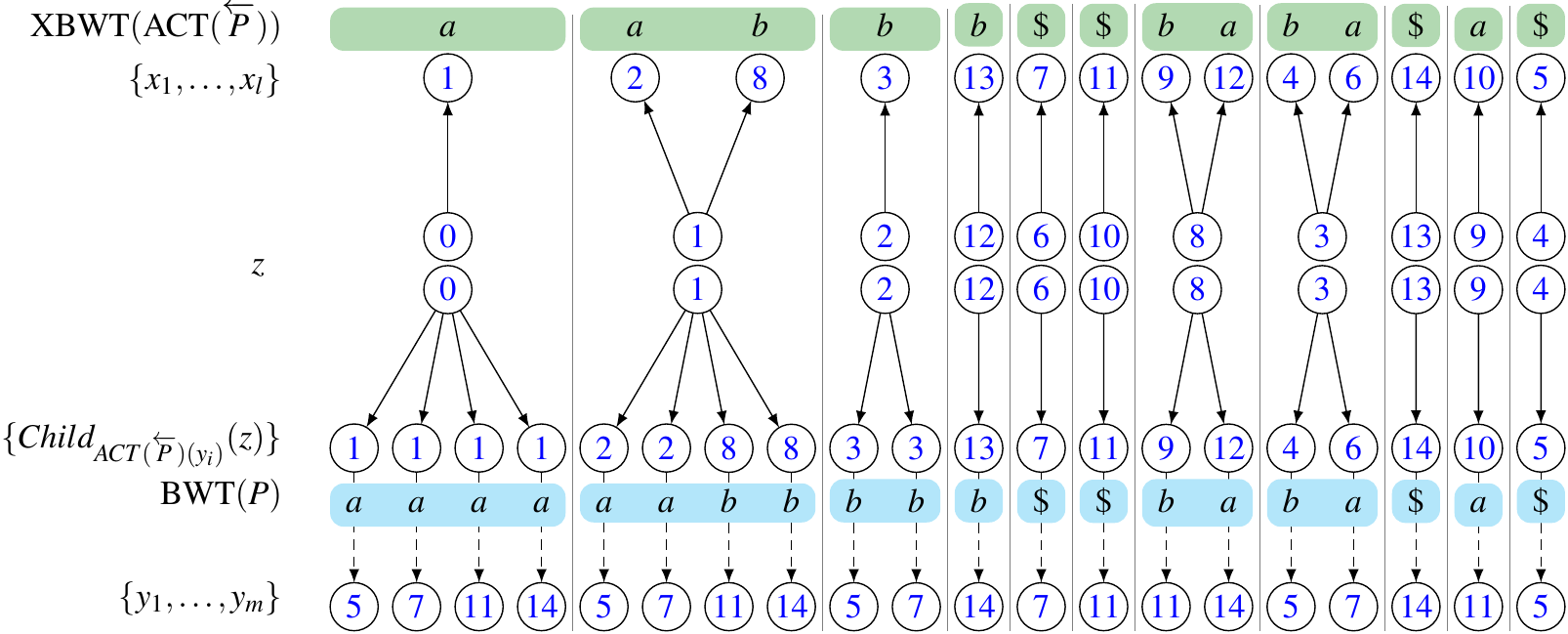}
  \caption{Illustration of Theorem~\ref{th:link:bwt:xbw} for the running example.\label{fig:bwt:xbw:tree}}
\end{figure}

\begin{proof}
  We define $T_B$ (resp. $T_X$) as the array of intervals of $\Decomp(P)$ \\(resp. $\Decompb(\ACT(\overleftarrow{P}))$) sorted in the interval order. 
  Let us prove that $T_B$ and $T_X$ have the same length, and that at the same position $i$, $T_B[i]$ and $T_X[i]$ represent the same prefix of $\overleftarrow{P}$.
  By Proposition~\ref{prop:bij:decomp:prefix}, the length of $T_B$ is $\#(\Prefix(\overleftarrow{P.S}))$. By the definition of $\Decompb$, the length of $T_X$ is the number of $1$ in $\XBWL(\ACT(\overleftarrow{P}))$, \ie the number of internal nodes of $\ACT(\overleftarrow{P})$, and thus is equal to $\#(\Prefix(\overleftarrow{P.S}))$. Hence, $T_B$ and $T_X$ have the same number of elements.

  Let be $i \in \inter{1}{\#(\Prefix(\overleftarrow{P.S}))}$. By Proposition~\ref{prop:bij:decomp:prefix}, $T_B[i]$ represents the $i^{th}$ suffix of strings of $P.S$ in lexicographic order. By the definition of $\XBWT(\ACT(\overleftarrow{P}))$, all the nodes $\PA(\ACT(\overleftarrow{P}))[k]$ for $k \in T_X[i]$ have the same parent $z$ in $\ACT(\overleftarrow{P})$, and $z$ represents the $i^{th}$ nodes in $\prec$ order, \ie the $i^{th}$ suffix of strings of $P.S$ in lexicographic order. 

  As for a given position $i$, $T_B[i]$ and $T_X[i]$ represent the same prefix of $\overleftarrow{P}$, we define the bijection $\acro{BWT\_XBW}$ such that for all $i$ in $\inter{1}{\#(\Prefix(\overleftarrow{P.S}))}$, $\acro{BWT}\_\acro{XBW}[T_B[i]] = T_X[i]$. 
As the tree $\ACT(\overleftarrow{P})$ represents the Aho-Corasick tree of $\overleftarrow{P.S}^{\$}$, we have a bijection $b_1$ from the node set of $\ACT(\overleftarrow{P})$ onto the set of prefixes of $\overleftarrow{P.S}^{\$}$. By the definition of functions $\pi$ and $\delta$, for any node $v$ of $\ACT(\overleftarrow{P})$, we have $b_1(v) = \overleftarrow{\pi(v)}\; \delta(v)$.

By the definition of $\XBWT(\ACT(\overleftarrow{P}))$, we have that for $T_X[i] = \inter{i'}{j'}$
\[
  \begin{array}{rcl}
    \#(T_X[i]) & = & \#\{ v \text{ node of } \ACT(\overleftarrow{P}) \ | \ \pi(v) = \overleftarrow{\acro{Dec\_Pre}[\acro{BWT\_XBW}^{-1}[T_X[i]]]} \}\\ 
               & = & \#\{ v \text{ node of } \ACT(\overleftarrow{P}) \ | \ v \text{ is a child of }b_1^{-1}(\acro{Dec\_Pre}[\acro{BWT\_XBW}^{-1}[T_X[i]]])\}. 
  \end{array}
\]
As $i' \in T_X[i]$ and $b_1^{-1}(\acro{Dec\_Pre}[\acro{BWT\_XBW}^{-1}[T_X[i]]]) = \acro{Parent}_{\ACT(\overleftarrow{P})} (\PA(\ACT(\overleftarrow{P}))[i'])$, 
\[
  \begin{array}{rcl}
    \#(T_X[i]) & = & \#\{ v \text{ node of } \ACT(\overleftarrow{P}) \ | \ v \text{ is a child of }\acro{Parent}_{\ACT(\overleftarrow{P})} (\PA(\ACT(\overleftarrow{P}))[i'])\}\\ 
               & =  & \#(\acro{Children}_{\ACT(\overleftarrow{P})} (\acro{Parent}_{\ACT(\overleftarrow{P})} (\PA(\ACT(\overleftarrow{P}))[i']))).
  \end{array}
\]
Let $\{x_1,x_2,\ldots, x_{\#(T_X[i])}\}$ be the set of children of $\acro{Parent}_{\ACT(\overleftarrow{P})} (\PA(\ACT(\overleftarrow{P}))[i'])$ sorted such that $x_1 \lhd_{P.\sigma} \ldots \lhd_{P.\sigma} x_{\#(T_X[i])}$.
By the definition of $\XBWT(\ACT(\overleftarrow{P}))$, for any $k \in \inter{1}{\#(T_X[i])}$, we have $\XBWT(\ACT(\overleftarrow{P}))[i'+k-1] = \delta[x_k]$.
By the definition of $\BWT(P)$, for $T_B[i] = \inter{i''}{j''}$, we get
  \[
    \begin{array}{rcl}
      \#(T_B[i]) & = & \#(\{w \in \overleftarrow{P.S}^{\$} \ | \   \acro{Dec\_Pre}[T_B[i]] \text{ is a prefix of } w \}) \\
      & = & \#(\{v \text{ leaf of } \ACT(\overleftarrow{P}) \ | \ b_1^{-1}(\acro{Dec\_Pre}[T_B[i]]) \text{ is an ancestor of } v \text{ in } \ACT(\overleftarrow{P}) \}).
    \end{array}
  \]
As $b_1^{-1}(\acro{Dec\_Pre}[T_B[i]]) = b_1^{-1}(\acro{Dec\_Pre}[\acro{BWT\_XBW}^{-1}[T_X[i]]]) = \acro{Parent}_{\ACT(\overleftarrow{P})} (\PA(\ACT(\overleftarrow{P}))[i'])$,
\[
\begin{array}{rcl}
\#(T_B[i]) &= &\#\{ v \text{ a leaf of } \ACT(\overleftarrow{P}) \ | \ \acro{Parent}_{\ACT(\overleftarrow{P})} (\PA(\ACT(\overleftarrow{P}))[i']) \text{ is an ancestor of } v \text{ in } \ACT(\overleftarrow{P})\}\\ 
& =  & \#(\acro{Leaves}_{\ACT(\overleftarrow{P})} (\acro{Parent}_{\ACT(\overleftarrow{P})} (\PA(\ACT(\overleftarrow{P}))[i']))).
\end{array}
  \]
Let $\{y_1,y_2,\ldots, y_{\#(T_B[i])}\}$ be the set of leaves of the subtree of $\acro{Parent}_{\ACT(\overleftarrow{P})} (\PA(\ACT(\overleftarrow{P}))[i'])$ in $\ACT(\overleftarrow{P})$ sorted such that $y_1 \lhd_{P.\sigma} \ldots \lhd_{P.\sigma} y_{\#(T_B[i])}$.
Given $k \in \inter{i''}{j''}$, we define $w_P[k]$ as the string $m_P[\Select_{\$}(m_P,\Rank_{\$}(m_P,\SA(P)[k] - 1 )) + 1 : SA(P)[k] + LRS(P)[k]-1]$. For all $k \in \inter{i''}{j''}$, the string $w_P[k]$ is a string of $P.S$. Moreover, the definition of $<_{P.\sigma}$ implies $w_P[i''] <_{P.\sigma} \ldots <_{P.\sigma} w_P[j'']$. As for $l \in \inter{1}{\#(T_B[i])}$, the string $\pi(y_l)$ is also a string of $P.S$, we get $\pi(y_l) = w_P[i''+l-1]$. 
Given $x$ and $y$ in $\inter{i''}{j''} \in  \Decomp(P)$, we obtain the following equivalences between orders:
 \[
      x < y \,\Leftrightarrow\,  w_P[x] <_{P.\sigma} w_P[y] \,\Leftrightarrow\, \pi(y_{x-i''+1}) <_{P.\sigma} \pi(y_{y-i''+1}) \,\Leftrightarrow\,   y_{x-i''+1} \lhd_{P.\sigma} y_{y-i''+1}.
  \]
By the definition of $\BWT(P)$, for all $k \in \inter{1}{\#(T_B[i])}$, we get that 
\[
\begin{array}{rcl}
\BWT(P)[i''+k-1] & = & m_P[\SA(m_P)[i''+k-1]-1]\\
& = & w_P[i''+k-1][\taille{w_P[i''+k-1]} - \LRS[i''+k-1] + 1] \\
& = & \pi(y_k)[\taille{w_P[i''+k-1]} - \LRS[i''+k-1] + 1]\\
& = & \delta[\acro{Child}_{\ACT(\overleftarrow{P})(y_k)} (\acro{Parent}_{\ACT(\overleftarrow{P})} (\PA(\ACT(\overleftarrow{P}))[i'])).
\end{array}
\]
\end{proof}

Theorem~\ref{th:link:bwt:xbw} provides us with a strong link between $\BWT(P)$ and $\XBWT(\ACT(\overleftarrow{P}))$, which allows transforming one a structure into the other. 
This leads to the following corollary.

\begin{corollary}\label{cor:build:bwt:xbw}
  \begin{itemize}
  \item Using tables $\BWT(P)$, $\LCP(P)$ and $\LRS(P)$ of an ordered set of strings $P$, we can build the tables $\XBWT(\ACT(\overleftarrow{P}))$ and $\XBWL(\ACT(\overleftarrow{P}))$ in linear time of $\norme{P.S}\times \#(\Sigma)$.
  \item Using tables $\XBWT(\ACT(S))$ and $\XBWL(\ACT(S))$ of a set of strings $S$, we can build the tables $\BWT(\overleftarrow{P})$, $\LCP(\overleftarrow{P})$ and $\LRS(\overleftarrow{P})$ in linear time of $\norme{S}\times \#(\Sigma)$ where 
  $P$ is an ordered set of strings 
  such that $P.S = S$.
  \end{itemize}
\end{corollary}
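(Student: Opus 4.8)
The plan is to turn the bijection \acro{BWT\_XBW} of Theorem~\ref{th:link:bwt:xbw} into an explicit construction, read off in each of the two directions; in both cases the two interval partitions are recovered by a single linear scan, and the work is to transfer the labelling across the bijection. Start with the passage from \BWT to \XBW. From $\LCP(P)$ and $\LRS(P)$ one obtains $\Decomp(P)$ directly from its definition by a left-to-right scan (cut between positions $k-1$ and $k$ exactly when $\LCP(P)[k]\neq\LRS(P)[k]$); let $u_1<\cdots<u_N$ be the resulting intervals in interval order, $N=\#(\Prefix(\overleftarrow{P.S}))$. By (the proof of) Theorem~\ref{th:link:bwt:xbw}, $\acro{BWT\_XBW}$ sends the $\ell$-th interval of $\Decomp(P)$ to the $\ell$-th interval of $\Decompb(\ACT(\overleftarrow P))$ in interval order, so it suffices to emit, for each $u_\ell=\inter{i}{j}$ in turn, the corresponding slices of $\XBWT(\ACT(\overleftarrow P))$ and $\XBWL(\ACT(\overleftarrow P))$ and concatenate them.

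With the notation of the theorem, if $z$ is the common parent and $x_1\lhd_{P.\sigma}\cdots\lhd_{P.\sigma}x_m$ its children, the \XBWT slice is $\delta[x_1]\cdots\delta[x_m]$ and the \XBWL slice is $0^{m-1}1$ (within a sibling group only the $\lhd_{P.\sigma}$-largest child is a last child). The key observation is that $\BWT(P)[i,j]$ lists $\delta[\acro{Child}_{\ACT(\overleftarrow{P})(y)}(z)]$ over the leaves $y$ below $z$ in $\lhd_{P.\sigma}$-order: every such symbol is some $\delta[x_t]$, and since $\acro{Leaves}_{\ACT(\overleftarrow{P})}(z)=\bigsqcup_t\acro{Leaves}_{\ACT(\overleftarrow{P})}(x_t)$ with $\min_{\lhd_{P.\sigma}}\acro{Leaves}(x_1)\lhd_{P.\sigma}\cdots\lhd_{P.\sigma}\min_{\lhd_{P.\sigma}}\acro{Leaves}(x_m)$ (this is how the order on children is derived from the order on leaves), the $\lhd_{P.\sigma}$-smallest leaf of $z$ not below $x_1,\dots,x_{t-1}$ lies below $x_t$; hence $\delta[x_1],\dots,\delta[x_m]$ appear in $\BWT(P)[i,j]$ precisely in their order of first occurrence. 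So the \XBWT slice of $u_\ell$ is produced by scanning $\BWT(P)[i,j]$ once and outputting each symbol at its first occurrence, using an array indexed by $\Sigma$ to detect repeats. Building $\Decomp(P)$ and all the scans cost $O(\taille{m_P})$, and resetting the auxiliary array once per interval costs $O(N\cdot\#(\Sigma))$; since $\taille{m_P}$ and $N$ are $O(\norme{P.S})$, the total is $O(\norme{P.S}\cdot\#(\Sigma))$.

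For the converse, apply Theorem~\ref{th:link:bwt:xbw} to the ordered set $\overleftarrow P$: then \acro{BWT\_XBW} is a bijection between $\Decomp(\overleftarrow P)$ and $\Decompb(\ACT(\overleftarrow{\overleftarrow P}))$, which is the input $\Decompb(\ACT(S))$ since $\overleftarrow{\overleftarrow P}=P$ and $P.S=S$. One scan of $\XBWL(\ACT(S))$ yields $\Decompb(\ACT(S))$, and its intervals, in interval order, are paired with those of $\Decomp(\overleftarrow P)$; the $\ell$-th of them is attached to a node $z$ of $\ACT(S)$, obtained via the usual \XBW navigation as the parent of $\PA(\ACT(S))[i']$ for $\inter{i'}{j'}$ that interval. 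By Theorem~\ref{th:link:bwt:xbw} the corresponding slice of $\BWT(\overleftarrow P)$ is the sequence $\delta[\acro{Child}_{\ACT(S)(y)}(z)]$ over the leaves $y$ below $z$ in $\lhd_{P.\sigma}$-order. We fill all slices by one traversal: enumerate the leaves of $\ACT(S)$ in $\lhd_{P.\sigma}$-order and, for each leaf $y$, walk from $y$ up to the root, appending to the slice of every ancestor $z$ the label of the node just below $z$ on the path to $y$; keeping a running cursor per slice makes each slice come out in $\lhd_{P.\sigma}$-order. A leaf $y=s\$$ contributes the $\taille{s\$}$ labels along its path, so the traversal does $\sum_{s\in S}\taille{s\$}=\taille{m_{\overleftarrow P}}$ elementary steps, which with the per-node navigation ($O(\#(\Sigma))$ each) is $O(\norme{S}\cdot\#(\Sigma))$. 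Finally $\LF(\overleftarrow P)$ is read off $\BWT(\overleftarrow P)$ (via the $C$ array and $\Rank$), then $\LRS(\overleftarrow P)$ via Lemma~\ref{le:lrs}, then $\LCP(m_{\overleftarrow P})$ from $\BWT(\overleftarrow P)$ in linear time, and then $\LCP(\overleftarrow P)=\min(\LCP(m_{\overleftarrow P}),\LRS(\overleftarrow P))$ entrywise (as in Proposition~\ref{prop:linear:lrs}); this is $O(\norme{S})$ and is absorbed.

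The hard part will be the leaf enumeration used in the second direction. The order $\lhd_{P.\sigma}$ on the leaves of $\ACT(S)$ is \emph{not} the depth-first order induced by the sibling order of $\ACT(S)$ --- in general the leaves of one subtree are $\lhd_{P.\sigma}$-interleaved with those of a sibling subtree, and one cannot compare two leaves in $\lhd_{P.\sigma}$-order by looking only at their lowest common ancestor --- yet this order is completely determined by the sibling orders recorded in $\XBWL(\ACT(S))$ (it has to be, since $\BWT(\overleftarrow P)$ genuinely depends on $P.\sigma$). Showing that it can nonetheless be recovered, and the $\BWT(\overleftarrow P)$ slices filled in place, within $O(\norme{S}\cdot\#(\Sigma))$ time is the technical core of this implication; the rest is bookkeeping around Theorem~\ref{th:link:bwt:xbw}.
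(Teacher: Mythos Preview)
Your first direction (\BWT\ to \XBW) is correct and is exactly the paper's argument: the paper calls the array of interval lengths $\BWD(P)$, builds it by the same scan of $\LCP(P)$ against $\LRS(P)$, and then for each interval outputs the first occurrences in $\BWT(P)[i,j]$ with $\XBWL$ equal to $0\cdots01$.

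In the second direction there is a genuine gap, and it stems from a misreading of the statement. You treat $P.\sigma$ (hence $\lhd_{P.\sigma}$) as a datum to be recovered from the input, and your final paragraph identifies ``the hard part'' as reconstructing a leaf order that is in general not the depth-first order. But in the corollary $P$ is part of the \emph{output}: ``where $P$ is an ordered set of strings such that $P.S=S$''. You are therefore free to choose $P.\sigma$, and the natural choice is to take $P$ topologically planar, i.e.\ to let $\lhd_{P.\sigma}$ be precisely the depth-first leaf order induced by the sibling order already encoded in $\XBWL(\ACT(S))$. For such a $P$ the leaves below any internal node $z$ are grouped by child, so by Theorem~\ref{th:link:bwt:xbw} the $\BWT$ slice at $z$ is simply $\delta[x_1]^{L(x_1)}\cdots\delta[x_m]^{L(x_m)}$, where $L(x_t)$ is the number of leaves in the subtree of $x_t$. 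This is what the paper does: it computes the leaf-count table $\acro{TL}$ in one pass using the constant-time parent/child navigation of~\cite{FerraginaLMM09}, and then expands each $\XBWT$ slice symbol by symbol according to $\acro{TL}$---no leaf enumeration is needed at all.

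Two remarks on your final paragraph. First, your assertion that $\lhd_{P.\sigma}$ is ``completely determined by the sibling orders recorded in $\XBWL$'' is false: many distinct (non-planar) leaf orders induce the same sibling orders at every node, and your parenthetical justification (``since $\BWT(\overleftarrow P)$ genuinely depends on $P.\sigma$'') argues the wrong direction of determination. Second, your leaf-to-root traversal would in fact work once you use the DFS leaf order; but then it is just a heavier implementation of the paper's $\acro{TL}$-based expansion.
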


The idea behind the algorithms is to exploit the link of Theorem~\ref{th:link:bwt:xbw} to compute each substring of the \BWT or of the \XBWT associated to each element of $\Decomp$ or of $\Decompb$. The algorithms for computing the \XBWT and the proof of Corollary~\ref{cor:build:bwt:xbw} are given in Appendix.

\section{Optimal ordering of strings for maximising compression}\label{se:optimal}

\subsection{Minimum permutation problem for \BWT and \XBWT}

Run-Length Encoding~\cite{Siren09} is a widely used method to compress strings. For a string $w$, the Run-Length Encoding splits $w$ into the minimum number of substrings containing a single symbol. The size of the Run-Length Encoding of $w$ is the cardinality of the minimum decomposition.
For example for $abbaaaccabbb = a^1 b^2 a^3 c^2 a^1 b^3$ (using the power notation $\alpha^n$ means $n$ copies of symbol $\alpha$), the size of the Run-Length Encoding is $6$ (for the decomposition has $6$ blocks). 

We define Run-Length measures for a \BWT and for a \XBW (similar to those of~\cite{HoltM14b}). For $P$ an ordered set of strings, let $d_B(P)$ be the cardinality of the set $\{i \in \inter{1}{\#(\BWT(P)-1)} \ | \ \BWT(P)[i] \neq \BWT(P)[i+1] \}$. Similarly, let  $d_X(P)$ be the cardinality of the set $\{i \in \inter{1}{\#(\XBWT(\ACT(P))-1)} \ | \ \XBWT(\ACT(P))[i] \neq \XBWT(\ACT(P))[i+1] \}$. 

Given two ordered sets of strings, $P_1$ and $P_2$ such that $P_1.S = P_2.S$ (\ie, they contain the same set of strings), Theorem~\ref{th:link:bwt:xbw} implies that their $\BWT$ may differ, and thus $d_B(P_1)$ and $d_B(P_2)$ may also differ. We define the following minimisation problems.
As the Run-Length Encoding of $\BWT(P)$ has size $d_B(P)+1$, finding an optimal solution of \mpb can help compressing $\BWT(P)$.

\begin{definition}[\mpb and \mpx]
  Let $S$ be a set of strings. The problem \mpb asks for an ordered set of strings $P$ that minimises $d_B(P)$ and such that $P.S = S$. The problem \mpx aks for an ordered set of strings $P$ that minimises $d_X(P)$ and such that $P.S = S$. 
\end{definition}

To simplify \mpb, we consider specific ordered sets of strings.  Let $P$ be an ordered set of strings and let $\perp$ denote the root of $\ACT(\overleftarrow{P})$. We say that  is \emph{topologically planar} if for each node $u$ in $\ACT(\overleftarrow{P})$ and $v \in \acro{Leaves}_{\ACT(\overleftarrow{P})} (\perp) \setminus \acro{Leaves}_{\ACT(\overleftarrow{P})} (u)$, there does not exists $u_1$ and $u_2$ in $\acro{Leaves}_{\ACT(\overleftarrow{P})} (u)$ such that $ u_1 \lhd_{P.\sigma} v \lhd_{P.\sigma} u_2$. In other words, $P$ is \emph{topologically planar} if we can draw the tree $\ACT(\overleftarrow{P})$ by ordering the leaves with $\lhd_{P.\sigma}$ without arcs crossing each other.

Let $P$ be an ordered set of strings, which is not necessarily topologically planar. We denote by $P_{tp}$ the ordered set of strings such that $P_{tp}.S = P.S$ and $P_{tp}.\sigma$ such that for all $u$ in $\ACT(\overleftarrow{P})$, $v \in \acro{Leaves}_{\ACT(\overleftarrow{P})} (\perp) \setminus \acro{Leaves}_{\ACT(\overleftarrow{P})} (u)$ and $u_1$ and $u_2$ in $\acro{Leaves}_{\ACT(\overleftarrow{P})} (u)$ such that $ u_1 \lhd_{P.\sigma} v \lhd_{P.\sigma} u_2$, we have $ u_1 \lhd_{P_{tp}.\sigma} u_2 \lhd_{P_{tp}.\sigma} v$. As we have a bijection between the set of circular permutations of $P.S$ and the set of leaves of $\ACT(\overleftarrow{P})$, we can unambiguously define the ordered set of strings $P_{tp}$ that is topologically planar.

\begin{proposition}\label{prop:leq:topo:good}
  Let $P$ be an ordered set of strings. We have $d_B(P_{tp}) \leq d_B(P)$ and $d_B(P_{tp}) = d_X(\overleftarrow{P_{tp}})$.
\end{proposition}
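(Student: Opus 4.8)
The plan is to prove the two assertions separately, first the equality $d_B(P_{tp}) = d_X(\overleftarrow{P_{tp}})$, and then the inequality $d_B(P_{tp}) \le d_B(P)$.

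For the equality, I would use Theorem~\ref{th:link:bwt:xbw} applied to the ordered set of strings $Q := \overleftarrow{P_{tp}}$. Note that $\overleftarrow{Q} = \overleftarrow{\overleftarrow{P_{tp}}}$ has the same set of strings as $P_{tp}$, so Theorem~\ref{th:link:bwt:xbw} relates $\BWT(Q) = \BWT(\overleftarrow{P_{tp}})$ and $\XBWT(\ACT(\overleftarrow{Q})) = \XBWT(\ACT(P_{tp}))$; more useful is to apply it directly with $P$ replaced by $P_{tp}$, relating $\BWT(P_{tp})$ with $\XBWT(\ACT(\overleftarrow{P_{tp}}))$. The bijection $\acro{BWT\_XBW}$ between $\Decomp(P_{tp})$ and $\Decompb(\ACT(\overleftarrow{P_{tp}}))$ preserves the interval order, and on each block $u = \inter{i}{j}$ it identifies the substring $\BWT(P_{tp})[i,j]$, read via the leaf order $\lhd_{P_{tp}.\sigma}$ on $\acro{Leaves}(z)$, with a sequence of child labels $\delta[\acro{Child}_{\ACT(\overleftarrow{P_{tp}})(y_l)}(z)]$; the corresponding \XBWT block lists the child labels $\delta[x_1]\ldots\delta[x_{\#(\ldots)}]$ ordered by $\lhd_{P_{tp}.\sigma}$ on $\acro{Children}(z)$. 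The key point is that \emph{when $P_{tp}$ is topologically planar}, the leaf order $\lhd_{P_{tp}.\sigma}$ induces on the children of $z$ an order consistent with the one obtained by grouping leaves by which child of $z$ lies above them — i.e. all leaves below a fixed child $x$ form a contiguous $\lhd_{P_{tp}.\sigma}$-interval. Hence the sequence $\acro{Child}_{\ACT(\overleftarrow{P_{tp}})(y_1)}(z), \ldots, \acro{Child}(z)$ for the sorted leaves $y_1 \lhd_{P_{tp}.\sigma} \cdots$ is exactly the children $x_1, x_2, \ldots$ each repeated $\#(\acro{Leaves}(x_k))$ times consecutively. Therefore $\BWT(P_{tp})[i,j]$ is obtained from the \XBWT block $\delta[x_1]\ldots\delta[x_{\#(\ldots)}]$ by replacing each $\delta[x_k]$ by as many copies of itself as there are leaves below $x_k$; this operation does not change the number of adjacent-unequal positions \emph{inside} the block. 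Since $\acro{BWT\_XBW}$ preserves block order, and since block boundaries contribute the same set of potential "breaks" on both sides, summing over all blocks and accounting for boundaries between consecutive blocks gives $d_B(P_{tp}) = d_X(\overleftarrow{P_{tp}})$. One must be slightly careful at block boundaries: two consecutive \XBWT blocks $\delta[x_1]\ldots$ and $\delta[x'_1]\ldots$ are adjacent in \XBWT exactly as the corresponding \BWT blocks are adjacent, and the duplicated last symbol of one block and first symbol of the next are $\delta$ of the same nodes, so a break occurs across the boundary on the \BWT side iff it occurs on the \XBWT side.

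For the inequality $d_B(P_{tp}) \le d_B(P)$, the idea is that passing from $P$ to $P_{tp}$ only permutes, \emph{within each block} $u = \inter{i}{j}$ of $\Decomp$, the symbols of $\BWT(P)[i,j]$ so as to make equal symbols (more precisely, symbols coming from the same child of $z$) contiguous. Indeed $P$ and $P_{tp}$ share the same set of strings, hence the same tree $\ACT(\overleftarrow{P}) = \ACT(\overleftarrow{P_{tp}})$ as an unordered tree, the same $\Decomp$ partition (the underlying $\SA$, $\LCP$, $\LRS$ depend only on $m_P$'s string content up to reordering of the concatenated strings — this needs the observation that $\Decomp(P)$ is invariant under changing $P.\sigma$), and within each block the multiset of \BWT symbols is the same. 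Changing from $\lhd_{P.\sigma}$ to $\lhd_{P_{tp}.\sigma}$ replaces, block by block, the sequence $\delta[\acro{Child}_{\ACT(y_l)}(z)]$ by a sequence where all leaves below a common child of $z$ are consecutive. A sequence obtained from another by grouping equal values together (here: values that are $\delta$ of the same node, a refinement of being equal as symbols) can only decrease — never increase — the number of adjacent distinct pairs. I would make this precise with the elementary lemma: if a multiset of symbols is arranged so that, for some partition of its positions into classes each of constant symbol, the classes form contiguous runs, then the number of adjacent-unequal pairs is at most that of any other arrangement respecting the same class/symbol data — actually one only needs that grouping a sub-collection of equal symbols together does not increase the count. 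Finally one checks the block boundaries are unaffected: the first and last symbols of the sorted-by-$\lhd_{P_{tp}.\sigma}$ block can be chosen — because topological planarity is a statement only about relative leaf orders, and the reordering respects the global interval structure of $\Decomp$ — so that no new break is created at a boundary; here the cleanest route is to argue that $d_B$ decomposes as a sum over blocks of internal breaks plus a fixed contribution from boundaries that depends only on $P.S$, so that minimizing each block's internal break count suffices, and topological planarity achieves that minimum simultaneously for all blocks.

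The main obstacle I anticipate is the careful bookkeeping at block boundaries, both for the equality and the inequality. For the equality, one must verify that the "$\delta[x_k]$ repeated $\#(\acro{Leaves}(x_k))$ times" expansion interacts correctly with how consecutive $\Decomp$ blocks sit next to each other in the full \BWT string versus how consecutive $\Decompb$ blocks sit in the full \XBWT string — i.e. that $\acro{BWT\_XBW}$ really does preserve not just the set of blocks but their linear arrangement, which is asserted in the theorem's proof (the arrays $T_B$, $T_X$ are both sorted in interval order and matched positionwise). For the inequality, the subtle point is that $\Decomp(P)$ does not literally depend on $P.\sigma$ only through $\lhd$ — one must confirm that $\SA(m_P)$, $\LCP(m_P)$, $\LRS(P)$, and hence the partition $\Decomp(P)$, together with the multiset of \BWT symbols inside each block, are genuinely invariant under changing the concatenation order $P.\sigma$ (changing which string borders which $\$$ can in principle affect $\LCP(m_P)$ at block-boundary ranks, which is precisely why the definition uses $\LCP(P) = \min(\LCP(m_P), \LRS(P))$ rather than $\LCP(m_P)$ — this is exactly the phenomenon illustrated in Figure~\ref{fig:bwt}). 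Once that invariance is in hand, the rest is the elementary run-counting argument above.
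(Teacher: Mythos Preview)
Your argument for the equality $d_B(P_{tp}) = d_X(\overleftarrow{P_{tp}})$ is correct and essentially the same as the paper's: both invoke Theorem~\ref{th:link:bwt:xbw} and observe that under topological planarity each \BWT block is the corresponding \XBWT block with every symbol duplicated some number of times, so internal breaks match block by block, and since $\acro{BWT\_XBW}$ preserves interval order the boundary breaks also match.

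For the inequality $d_B(P_{tp}) \le d_B(P)$ your approach differs from the paper's and has a genuine gap. The paper proceeds by \emph{local moves}: given a non-planarity witness $u_1 \lhd_{P.\sigma} v \lhd_{P.\sigma} u_2$ (with $u_1,u_2$ leaves under a node $u$ and $v$ not), it passes to $P'$ with $u_1 \lhd_{P'.\sigma} u_2 \lhd_{P'.\sigma} v$ and checks directly on the single affected \BWT block that, because the symbols at the $u_1$- and $u_2$-positions coincide while the one at the $v$-position differs, the break count cannot increase; iterating such moves reaches $P_{tp}$.

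Your global block-wise argument correctly establishes that $\Decomp(P)$ and the per-block symbol multisets depend only on $P.S$, and that any planar order minimises each block's \emph{internal} break count. But your treatment of inter-block boundaries does not go through. The assertion that the boundary contribution ``depends only on $P.S$'' is false: the first and last symbols of the block for a node $z$ are the labels of the child of $z$ lying above the $\lhd$-minimal (resp.\ $\lhd$-maximal) leaf under $z$, and which leaf that is depends on $\sigma$. Indeed, this dependence is precisely why \mpx is a nontrivial optimisation over planar orders in Section~\ref{se:optimal}. Your alternative fix --- choosing a planar order so as not to create new boundary breaks --- does not apply either, since $P_{tp}$ is a \emph{specific} planar order determined by $P$, not one you are free to select. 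To close the gap you would need to track how the concrete passage $P \mapsto P_{tp}$ affects the boundary symbols, and the natural way to do that is exactly the paper's step-by-step argument.
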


\begin{proof}
  For the first inequality, let us prove that any modification of the order used to create $P_{tp}$ decreases the value of $d_B$. Let $P$ be an ordered set of strings which is not topologically planar. Let be $u$ in $\ACT(\overleftarrow{P})$, $v \in \acro{Leaves}_{\ACT(\overleftarrow{P})} (\perp) \setminus \acro{Leaves}_{\ACT(\overleftarrow{P})} (u)$ and $u_1$ and $u_2$ in $\acro{Leaves}_{\ACT(\overleftarrow{P})} (u)$ such that $ u_1 \lhd_{P.\sigma} v \lhd_{P.\sigma} u_2$. Let $P'$ the copy of $P$ where the only difference is $ u_1 \lhd_{P_{tp}.\sigma} u_2 \lhd_{P_{tp}.\sigma} v$. 
  Let be $x \in \Decomp	(P)$ such that $\acro{Dec\_Pre}[x] = u$.
  By Theorem~\ref{th:link:bwt:xbw}, for all $y \in \Decomp(P)\setminus \{x\}$, we have $\BWT(P)[y] = \BWT(P')[y]$, $\BWT(P)[x] = \ldots \delta[u_1]  \delta[v_1] \delta[u_2] \ldots$ and $\BWT(P)[x] = \ldots \delta[u_1]  \delta[u_2] \ldots \delta[v_1] \ldots $ with $v_1 = \acro{Child}_{\ACT(\overleftarrow{P})(v)} (u)$. As $\delta[u_1] = \delta[u_2]$ and $\delta[u_1] \neq \delta[v_1]$, we have  $d_B(P') \leq d_B(P)$.

  For the second inequality, it is enough to see that for an element $u$ in $\Decomp(P_{tp})$, the numbers of distinct successive symbols is identical in $\BWT(P_{tp})[u]$ and $\XBWT(\ACT(\overleftarrow{P_{tp}}))[\acro{BWT\_XBW}[u]]$. Thus, for two successive elements $u = \inter{i_1}{j_1}$ and $v = \inter{i_2}{j_2}$ of $\Decomp(P_{tp})$, we obtain an equivalence between $\acro{BWT\_XBW}[u] = \inter{i'_1}{j'_1}$ and $\acro{BWT\_XBW}[v] = \inter{i'_2}{j'_2}$:\\[.1cm]
 \centerline{$ \BWT(P_{tp})[j_1] = \BWT(P_{tp})[i_2] \quad \text{ \iif }\quad \BWT(\ACT(\overleftarrow{P_{tp}}))[j'_1] = \BWT(\ACT(\overleftarrow{P_{tp}}))[i'_2].$}
\end{proof}

Thanks to Proposition~\ref{prop:leq:topo:good}, we can restrict the search to ordered sets of strings that are topologically planar when solving \mpb or \mpx. Furthermore, an optimal solution of \mpb for $S$ is also an optimal solution of \mpx for $\overleftarrow{S}$, and vice versa. 
This yields the following theorem, whose proof is in Appendix.

\begin{theorem}\label{th:mpx:opt}
  Let $S$ be a set of strings. We can find an optimal solution for \mpb  and for \mpx in $O(\norme{S} \times \#(\Sigma))$ time.
\end{theorem}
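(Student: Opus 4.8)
The plan is to reduce the minimum-permutation problem for the \BWT to a combinatorial optimization on the fixed tree $\ACT(\overleftarrow{S^{\$}})$ (which does not depend on the order $\sigma$), and then solve that optimization greedily in a bottom-up traversal. By Proposition~\ref{prop:leq:topo:good}, it suffices to search among topologically planar ordered sets of strings, and an optimal solution for \mpb on $S$ yields an optimal solution for \mpx on $\overleftarrow{S}$; hence I only need to handle \mpb. First I would observe that, for a topologically planar $P$, Theorem~\ref{th:link:bwt:xbw} decomposes $\BWT(P)$ into the blocks indexed by $\Decomp(P)$, and each block corresponds to the sequence of labels of the children of one internal node $z$ of $\ACT(\overleftarrow{P})$, read in the leaf order $\lhd_{P.\sigma}$ restricted to $z$'s subtree. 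So $d_B(P)$ splits into two contributions: (i) \emph{intra-block} cost, the number of adjacent label changes within each block; and (ii) \emph{inter-block} cost, one unit for each pair of consecutive blocks whose boundary symbols differ.

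Next I would analyze each contribution. For the intra-block cost at a node $z$ with children $x_1,\dots,x_k$ (with labels $\delta[x_i]$), once $z$'s subtree is drawn planarly, the induced order on the children is a \emph{linear} order obtained by concatenating, in $\lhd_{P.\sigma}$ order, the leaf-intervals of the children; the number of label changes in $\delta[x_{(1)}]\dots\delta[x_{(k)}]$ is minimized (over all linear arrangements of the children consistent with planarity, i.e. all permutations of the children) by grouping children with equal labels together, giving exactly (number of distinct labels among the children) $-1$, which is \emph{independent of the order} — so the intra-block cost is a fixed quantity determined by $S$ alone. The inter-block cost is the only place the order matters: consecutive blocks in the interval order on $\Decomp(P)$ correspond to consecutive internal nodes of $\ACT(\overleftarrow{P})$ in $\prec$-order, i.e. essentially to a DFS-like traversal of the tree whose branching order at each node is the order $\lhd_{P.\sigma}$; the boundary symbol of a block is the label of the first (resp. last) child in that order. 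So choosing $\sigma$ amounts to choosing, independently at every internal node, an ordering of its children, and we want to chain these local first/last labels so as to minimize the number of mismatches along the global traversal — which reduces to choosing, at each node, which label to expose ``on the left'' and which ``on the right''.

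The key step is then to set up and solve this as a dynamic program / greedy optimization over the tree, processed bottom-up: for each node $u$ I would compute, for each possible pair $(\text{leftmost label}, \text{rightmost label})$ that the subtree rooted at $u$ can present at its outer boundary, the minimum number of internal mismatches in the traversal of that subtree. Because the alphabet is $\Sigma$ and at each node the choice is which of the at most $\#(\Sigma)$ distinct child-labels to put first/last (the interior being freely groupable), each node is processed in $O(\#(\Sigma))$ time after sorting children by label, and the total work is $O(\norme{S}\times \#(\Sigma))$, matching the claimed bound; constructing $\ACT(\overleftarrow{S^{\$}})$ and the tables from $S$ is linear, and the optimal $\sigma$ (hence the optimal ordered set $P$) is recovered by backtracking through the DP, then symmetrized via $\overleftarrow{\ \cdot\ }$ to get the \mpx solution.

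\textbf{The main obstacle} I anticipate is making the ``inter-block cost depends only on per-node left/right label choices, and these choices are mutually independent across nodes'' claim fully rigorous: one must carefully check that the global $\prec$-order traversal of $\ACT(\overleftarrow{P})$ visits the blocks so that consecutive blocks meet exactly at a parent/child or sibling-subtree boundary governed by a single node's child-order, that planarity (Proposition~\ref{prop:leq:topo:good}) genuinely frees us to permute children arbitrarily without global constraints, and that the intra-block minimum $(\#\text{distinct labels}-1)$ is simultaneously achievable with \emph{any} prescribed first/last label pair — so the two cost components truly decouple and the per-node DP is exact rather than merely a lower bound. Once that decoupling is established, the remaining argument (greedy/DP on the tree, complexity accounting) is routine.
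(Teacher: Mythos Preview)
Your reduction via Proposition~\ref{prop:leq:topo:good} and the decomposition of $d_B$ into intra-block and inter-block contributions are correct and match the paper's starting point. The intra-block analysis is also right: for a topologically planar $P$, the block at an internal node $z$ of $\ACT(\overleftarrow{P})$ is a concatenation of constant runs, one per child, and since children in a trie carry distinct labels, the intra-block cost is exactly $(\#\text{children of }z)-1$, independent of~$\sigma$.

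The gap is precisely the one you flag as the ``main obstacle'', and it is fatal to the tree-DP plan as stated. The $\prec$-order on internal nodes of $\ACT(\overleftarrow{P})$ is \emph{not} a DFS of that tree. Concretely, the block associated to an internal node $z$ with prefix $w\in\Prefix(\overleftarrow{P.S})$ sits at the position of $\overleftarrow{w}$ in the lexicographic order of \emph{suffixes of $P.S$}; this order is determined by $S$ alone and does not depend on $\sigma$ at all, so it cannot be ``the DFS whose branching order is $\lhd_{P.\sigma}$''. A two-string example already breaks it: for $P.S=\{ab,ba\}$ the internal nodes of $\ACT(\overleftarrow{P})$ are $\varepsilon,a,ab,b,ba$ and the block order (lex on the reversals) is $\varepsilon,a,ba,b,ab$, which is not a DFS of the trie. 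Consequently, the blocks corresponding to a subtree of $\ACT(\overleftarrow{P})$ are \emph{not} contiguous in the block sequence, so a bottom-up DP that summarises a subtree by its outer ``left/right exposed label'' does not model the actual inter-block boundaries.

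What survives from your analysis is exactly what the paper exploits: the block order is fixed, and the only freedom is to permute the (pairwise distinct) symbols inside each block. The paper names this residual problem \mpt\ (Lemma~\ref{le:equi:xbw:table}) and solves it on the \emph{linear chain} of blocks rather than on the tree: compute, for each pair of consecutive blocks, the intersection $B(T,D)[i]$ of their symbol sets; split the chain wherever the intersection has size $0$ or $1$ (Lemma~\ref{le:decomp:mpt:all}); on each remaining segment, where every consecutive intersection has size $\ge 2$, a single left-to-right greedy pass (Algorithm~\ref{algo:mpt}, Lemma~\ref{le:opt:mpt}) attains the trivial lower bound $\#(T)-\#(D)+1$. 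No tree DP is needed, and the whole thing runs in $O(\norme{S}\times\#(\Sigma))$. If you want to repair your approach, replace the subtree DP by this chain argument; the per-node child-ordering you identified is exactly the permutation freedom inside each block, but the coupling between blocks is governed by adjacency in the fixed suffix order, not by tree edges.
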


\subsection{Proof of Theorem~\ref{th:mpx:opt}}

As a reminder, Proposition~\ref{prop:leq:topo:good} states that an optimal solution of \mpb is also an optimal solution of \mpx, and vice versa. In the following of this proof, we only prove the result regarding \mpx.

To start, let us give an overview of algorithm:
\begin{enumerate}
\item we take an random permutation $\sigma$ of $S$ and define $P$ such that $P.S = S$ and $P.\sigma= \sigma$ ,
\item we build $\ACT(P)$, $\XBWT(\ACT(P))$ and $\Decompb(\ACT(P))$,
\item\label{item:algo} we find $P'$ which is an optimal solution of \mpx.
\end{enumerate}
%

In the following, we define the problem \mpt and explicit its link to the problem \mpx (Lemma~\ref{le:equi:xbw:table}). Lemma~\ref{le:equi:xbw:table} gives us a linear algorithm for finding an optimal solution of \mpt, and thus we can apply this algorithm to obtain an optimal solution for \mpx.

Given $A$ an array of symbols of $\Sigma$, we define $\acro{Char}(A)$ as the set of (different) symbols in $A$.
Given $T$ an array of $n$ symbols of $\Sigma$ and $D$ an integer interval partition of $\inter{1}{n}$ such for each interval $\inter{i}{j}$ of $D$, $\acro{Char}(T[i,j]) = j-i+1$ (\ie all the symbols of $T[i,j]$ are different), the problem \mpt is to find a $T'$ such that for all $\inter{i}{j} \in D$, $\acro{Char}(T[i,j]) = \acro{Char}(T'[i,j])$ and which minimises $d_A(T',D) := \#\{i \in \inter{1}{n-1} \ | \ T[i] \neq T[i+1] \}$.

\begin{lemma}\label{le:equi:xbw:table}
  Let $S$ be a set of strings and let $P$ be an ordered set of strings such that $P.S = S$.
  For an optimal solution $T'$ of \mpt for $\XBWT(\ACT(P))$ and for $\Decompb(\ACT(P))$, there exists an optimal solution $P'$ of \mpx for $S$ such that $\XBWT(\ACT(P')) = T'$.
\end{lemma}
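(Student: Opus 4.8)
\textbf{Proof plan for Lemma~\ref{le:equi:xbw:table}.}
The plan is to set up an explicit back-and-forth correspondence between the orderings $\lhd_{P.\sigma}$ on the children of each internal node of $\ACT(P)$ and the rearrangements of the blocks of $\XBWT(\ACT(P))$ that are admissible for \mpt. First I would observe, using Theorem~\ref{th:link:bwt:xbw} (and in particular the structure of $\XBWT(\ACT(P))$ described there), that each interval $\inter{i}{j}$ of $\Decompb(\ACT(P))$ consists exactly of the children of one internal node $z$, listed in $\lhd_{P.\sigma}$ order, so that $\acro{Char}(\XBWT(\ACT(P))[i,j]) = j-i+1$ — i.e. the hypothesis of \mpt is satisfied by $(\XBWT(\ACT(P)),\Decompb(\ACT(P)))$. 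Conversely, given an array $T'$ that is admissible for \mpt (same multiset of symbols, all distinct, in each block), the ordering of the symbols inside block $\inter{i}{j}$ induces, via $\delta$, an ordering of $\acro{Children}_{\ACT(P)}(z)$, because within a block the children of $z$ all carry distinct labels; extending this to the leaves (each leaf inherits the order of its ancestor child of $z$, exactly as $\prec$ is extended to children in the Notation section) yields a total order on $\acro{Leaves}_{\ACT(P)}(\perp)$, hence a circular permutation $\sigma'$ of $S$, hence an ordered set $P'$ with $P'.S = S$.

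The key steps, in order, are: (i) show the map $T' \mapsto P'$ just described is well-defined — this uses that a total order on leaves of a tree is equivalent to a choice of ordering of the children of every internal node, which is precisely the bijection between circular permutations of $P.S$ and leaf-orders of $\ACT(\overleftarrow{P})$ that the paper already invokes when defining $P_{tp}$; (ii) show $\XBWT(\ACT(P')) = T'$, which is immediate once one checks that reordering the children of $z$ according to $T'[i,j]$ reproduces $T'[i,j]$ in the $i$-th block of the \XBWT, and that blocks for different internal nodes do not interact because $\Decompb$ is the same partition for $P$ and $P'$ (the underlying tree $\ACT(\cdot)$ and the set of internal nodes, hence the block boundaries, depend only on $S$, not on $\sigma$); (iii) conclude that $d_X(P') = d_A(T',\Decompb(\ACT(P)))$, so an optimal $T'$ yields an optimal $P'$. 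For (iii) I would note that the only positions where a boundary value can change when we vary $\sigma$ are the block borders, and inside a block all symbols are distinct so the count of internal mismatches is forced; thus $d_X$ and $d_A$ differ by a constant that is the same for every admissible ordering, which makes minimising one equivalent to minimising the other.

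The main obstacle I anticipate is step (i) together with the ``there exists'' phrasing: one must be careful that the leaf-order constructed from $T'$ is genuinely a \emph{total} order (not merely a preorder), and that it is the one the theorem wants, i.e. that $\pi(u) <_{P'.\sigma} \pi(v)$ matches the position of $u$'s and $v$'s relevant child-labels in $T'$. The subtlety is that $T'$ only fixes the \emph{relative order of children of a common parent}; comparing two leaves in different subtrees of $z$ is determined, but comparing leaves whose lowest common ancestor is above $z$ is \emph{not} constrained by the block for $z$ — it is constrained by the block of that higher ancestor. So one has to assemble the global order from all blocks simultaneously and check consistency, exactly as the extension of $\prec$ from leaves to children (and back) in the Notation section guarantees. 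Once this bookkeeping is done, steps (ii) and (iii) are routine, and Lemma~\ref{le:equi:xbw:table} follows; combined with the linear-time solvability of \mpt (established separately) and with Proposition~\ref{prop:leq:topo:good}, this gives Theorem~\ref{th:mpx:opt}.
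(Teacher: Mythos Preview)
Your proposal is correct and follows essentially the same approach as the paper's own proof: use Theorem~\ref{th:link:bwt:xbw} to identify each block of $\Decompb(\ACT(P))$ with the set of children of an internal node, read off from $T'$ an ordering of those children, extend it to a total order on the leaves, and take the resulting circular permutation as $P'.\sigma$. The paper's proof is in fact considerably terser than your plan (it omits your step~(iii) and the consistency discussion in step~(i)), so your write-up would, if anything, be more complete; the only quibble is that in step~(iii) you need not say $d_X$ and $d_A$ ``differ by a constant'' --- once $\XBWT(\ACT(P'))=T'$ they are literally equal.
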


Let $T$ be an array of $n$ symbols of $\Sigma$ and let $D$ be an integer interval partition of $\inter{1}{n}$ such for each interval $\inter{i}{j}$ of $D$, $\acro{Char}(T[i,j]) = j-i+1$. Let $A(D)$ be the array of all intervals in $D$ in the order $<$ and $B(T,D)$ the array of size $\#(A(D))-1$ such that the position $i$ of $B(T,D)$ is $B(T,D)[i] = \acro{Char}(A(D)[i]) \cap \acro{Char}(A(D)[i+1])$. We define also $\acro{word}(C)$ for a set of symbols $C = \{c_1,\ldots,c_m\}$ the strings $c_1 \ldots c_m$ where $c_1 < \ldots < c_m$.

\begin{lemma}\label{le:decomp:mpt:all}
  Let $T$ be an array of $n$ symbols of $\Sigma$ and let $D$ be an integer interval partition of $\inter{1}{n}$ such for each interval $\inter{i}{j}$ of $D$, $\acro{Char}(T[i,j]) = j-i+1$. 
  \begin{itemize}
  \item If there exists $i$ in $\inter{1}{n}$ such that $\inter{i}{i}\in D$, we have $T'_1[1,i-1] T'_2$ is an optimal solution of \mpt for $T$ and for $D$ where $T'_1$ is an optimal solution of \mpt for $T[1,i]$ and for $\{\inter{i'}{j'} \in D \ | \ j\leq i\}$ and $T'_2$ in an optimal solution of \mpt for $T[i,n]$ and for $\{\inter{i'}{j'} \in D \ | \ i'\geq i\}$.
  \item If there exists $i$ in $\inter{1}{\#(B(T,D))}$ such that $\#(B(T,D)[i]) = 0$, we have $T'_1 T'_2$ is an optimal solution of \mpt for $T$ and for $D$ where $T'_1$ is an optimal solution of \mpt for $T[1,A(D)[i][1]]$ and for $\{\inter{i'}{j'} \in D \ | \ j\leq A(D)[i][1]\}$ and $T'_2$ in an optimal solution of \mpt for $T[A(D)[i+1][0],n]$ and for $\{\inter{i'}{j'} \in D \ | \ i'\geq A(D)[i+1][0]\}$.
  \item If there exists $i$ in $\inter{1}{\#(B(T,D))}$ such that $\#(B(T,D)[i]) = 1$, we have $T'_1 a a T'_2$ is an optimal solution of \mpt for $T$ and for $D$ where $B(T,D)[i] = \{a\}$, $T'_1$ is an optimal solution of \mpt for $T[1,A(D)[i-1][1]]\acro{word}(\acro{Char}(A(D)[i]\setminus \{a\})$ and for $\{\inter{i'}{j'} \in D \ | \ j < A(D)[i][1]\}\cup\{\inter{A(D)[i][0]}{A(D)[i][1]-1} \}$ and $T'_2$ in an optimal solution of \mpt for $\acro{word}(\acro{Char}(A(D)[i+1]\setminus \{a\})T[A(D)[i+2][0],n]$ and for $\{\inter{i'}{j'} \in D \ | \ i'> A(D)[i+1][0]\}\cup\{\inter{A(D)[i+1][0]+1}{A(D)[i][1]} \}$.
  \end{itemize}
\end{lemma}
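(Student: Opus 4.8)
\textbf{Proof plan for Lemma~\ref{le:decomp:mpt:all}.}

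The plan is to prove each of the three cases by the same template: split the pair $(T,D)$ at a distinguished position, argue that any feasible $T'$ for the whole instance restricts to feasible solutions of the two sub-instances, and conversely that gluing two sub-solutions yields a feasible solution; then show that the cost $d_A$ of the glued solution equals the sum of the two sub-costs plus a fixed constant that is forced (independent of the choice of $T'$). Optimality then follows because minimising the total is the same as minimising each part separately. Throughout I will use the key structural fact underlying \mpt: inside a single interval $\inter{i}{j}\in D$ all symbols are distinct, so the only freedom we have in constructing $T'$ is the \emph{order} of the symbols within each block; the multiset of symbols per block is fixed. Hence $d_A(T',D)$ decomposes as a sum over blocks of ``internal'' run-breaks (which is always $\#(\text{block})-1$, since the symbols are distinct, contributing a constant $n-\#(D)$) plus a sum over consecutive block-boundaries of an indicator that the last symbol of one block differs from the first symbol of the next. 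So minimising $d_A$ amounts to minimising the number of boundaries at which the adjacent symbols mismatch, and at each boundary $i$ we can make them match exactly when $B(T,D)[i]\neq\emptyset$, by placing a common symbol last in block $i$ and first in block $i+1$.

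First case (a singleton block $\inter{i}{i}\in D$): here position $i$ has a forced symbol $T[i]$ in \emph{every} feasible $T'$, because $\acro{Char}(T'[i,i])=\acro{Char}(T[i,i])$. So the boundary between $T[1,i-1]$ and $T[i,\ldots]$, and the boundary between $T[i,i]$ and $T[i+1,\ldots]$, are entirely governed by whatever symbols sit adjacent to the fixed value $T[i]$; there is no interaction across position $i$ between the left part and the right part. I will formalise this by noting that the set of feasible $T'$ is in bijection with pairs (feasible solution of the left sub-instance $T[1,i]$, $D_{\leq}$) $\times$ (feasible solution of the right sub-instance $T[i,n]$, $D_{\geq}$) that agree on the forced coordinate $i$, and that $d_A$ splits additively across this bijection. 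Optimising each factor gives $T'_1[1,i-1]T'_2$.

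Second case (an empty intersection $\#(B(T,D)[i])=0$): the blocks $A(D)[i]$ and $A(D)[i+1]$ share no symbol, so \emph{regardless} of the orderings chosen, the last symbol of block $i$ differs from the first symbol of block $i+1$ — that boundary always costs $1$. Again there is no coupling across this boundary, so the feasible solutions factor as a product of feasible solutions of the left prefix (ending at $A(D)[i][1]$) and the right suffix (starting at $A(D)[i+1][0]$), $d_A$ is additive plus the forced $+1$, and optimising the factors gives $T'_1T'_2$. The third case ($\#(B(T,D)[i])=1$, say $B(T,D)[i]=\{a\}$) is the only genuinely delicate one, and I expect it to be the main obstacle: here blocks $i$ and $i+1$ have exactly the one common symbol $a$, and the claim is that it is always optimal to place $a$ as the last symbol of block $i$ and the first symbol of block $i+1$, thereby making that boundary cost $0$. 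The subtlety is that doing so constrains the orderings of blocks $i$ and $i+1$ at their \emph{other} ends as well, which could in principle worsen the boundaries with blocks $i-1$ and $i+2$; I need an exchange argument showing it never does. Concretely, take any feasible $T'$; if $a$ is not already at the block-$i$/block-$(i+1)$ interface, I move it there. Moving $a$ to the right end of block $i$: since $a$ is the unique shared symbol, the element it displaces is $\neq a$, and one checks that the boundary with block $i-1$ cannot gain a mismatch it did not already have (here I use $\#(B(T,D)[i-1]\cap\{a\})$ considerations / the distinctness within the block), while the interface with block $i+1$ can only improve; symmetrically on the other side. After this normalisation, the two copies of $a$ at the interface decouple the instance exactly as in the first case — they play the role of a forced singleton — and the residual sub-instances are precisely $T[1,A(D)[i-1][1]]\,\acro{word}(\acro{Char}(A(D)[i])\setminus\{a\})$ with its block set, and $\acro{word}(\acro{Char}(A(D)[i+1])\setminus\{a\})\,T[A(D)[i+2][0],n]$ with its block set, as stated. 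Assembling the three cases, together with the observation that when none of the three situations occurs every boundary has $\#(B(T,D)[i])\geq 2$ (the base-case analysis handled elsewhere), completes the recursive characterisation and hence the lemma.
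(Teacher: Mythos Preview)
Your cases 1 and 2 are fine and amount to exactly what the paper gives: its entire proof is the single line ``All the proofs are derived from the equality $d_A(T[1,n],D)=d_A(T[1,i],\{[i',j']\in D\mid j'\le i\})+d_A(T[i,n],\{[i',j']\in D\mid i'\ge i\})$ for all $i$,'' and your write-up simply spells this out.

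Case 3 is another matter, and your exchange argument has a genuine gap. You assert that moving $a$ to the right end of block $i$ ``cannot gain a mismatch it did not already have'' at the boundary with block $i{-}1$. That step fails: if $a$ currently sits at the \emph{left} end of block $i$ and matches the right end of block $i{-}1$, displacing it creates a new mismatch there, and the single saving you obtain at the $(i,i{+}1)$ boundary need not cover \emph{both} neighbouring losses simultaneously. Concretely, take four blocks with symbol sets $\{a,x\},\{a,b\},\{a,c\},\{a,y\}$ where $a,b,c,x,y$ are pairwise distinct, so that every $B(T,D)[k]=\{a\}$. The arrangement $xa\mid ab\mid ca\mid ay$ matches the first and third boundaries and achieves $d_A=5$, which is optimal (four forced internal breaks, and block $2$ can place its unique $a$ at only one end, so at most two of the three boundaries can be matched). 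Applying the case-3 split at $i=2$ forces block $2$ to end and block $3$ to begin with $a$; the reduced left sub-instance $\{a,x\},\{b\}$ has optimum $2$, the reduced right sub-instance $\{c\},\{a,y\}$ has optimum $2$, and gluing as $T'_1\,a\,a\,T'_2$ gives $d_A=6>5$. So the construction in the third bullet is not optimal in this instance, and no local exchange of the kind you sketch can establish it.

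The paper's one-sentence proof does not address this difficulty either; as stated, the third bullet does not hold without some additional hypothesis (for instance that the chosen $i$ is extremal among the size-$1$ intersections). Your plan is sound for the first two bullets but cannot be completed for the third as written.
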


\begin{proof}
  All the proofs are derived from the equality $d_A(T[1,n],D) = d_A(T[1,i],\{\inter{i'}{j'} \in D \ | \ j\leq i\}) + d_A(T[i,n],\{\inter{i'}{j'} \in D \ | \ i'\geq i\})$ for all $ i \in \inter{1}{n}$.
\end{proof}

\begin{lemma}\label{le:opt:mpt}
  Let $T$ be an array of $n$ symbols of $\Sigma$ and let $D$ be an integer interval partition of $\inter{1}{n}$ such for each interval $\inter{i}{j}$ of $D$, $\acro{Char}(T[i,j]) = j-i+1$. In the case where for all $i \in \inter{1}{\#(B(T,D))}$, $\#(B(T,D)[i]) \geq 2$, Algorithm~\ref{algo:mpt} gives an optimal solution of \mpt in $\#(T)\times \#(\Sigma)$.
\end{lemma}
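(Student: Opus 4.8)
The plan is to analyse the structure of the problem \mpt in the remaining case, where every adjacency constraint $B(T,D)[i]$ between two consecutive intervals of $D$ shares at least two symbols. First I would observe that, by Lemma~\ref{le:decomp:mpt:all}, the cases with a singleton interval of $D$ or with $\#(B(T,D)[i]) \in \{0,1\}$ can be peeled off and solved by recursion on strictly smaller instances; so the interesting situation is precisely the one addressed here, and one may assume it holds globally for the sub-instance under consideration. The key point is that in this case the value $d_A(T',D)$ cannot be pushed below a natural lower bound, and Algorithm~\ref{algo:mpt} is designed to attain it. Concretely, I would first state and prove the lower bound: for any valid $T'$, each interval $\inter{i}{j}\in D$ contributes at least $j-i$ ``internal'' breaks (its symbols are all distinct, so every adjacent pair inside differs), and between two consecutive intervals $A(D)[k]$ and $A(D)[k+1]$ one cannot always avoid a break at the junction; a careful counting shows that the total is at least $n-1-(\#(A(D))-1) + (\text{number of junctions that are unavoidably breaks})$, and when $\#(B(T,D)[k])\ge 2$ for all $k$ one can in principle make every junction a non-break, so the bound is $n-\#(A(D))$.

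Next I would argue that Algorithm~\ref{algo:mpt} actually realises this bound. The algorithm processes the intervals of $D$ in the order given by $A(D)$; for each interval it chooses a cyclic ordering (a word) of its symbol set such that its first symbol equals the last symbol of the previously placed interval, and its last symbol is chosen to be a symbol that also lies in $B(T,D)$ with the next interval — this is exactly where $\#(B(T,D)[k])\ge 2$ is used, since it guarantees that one can pick two distinct symbols of $\acro{Char}(A(D)[k])$, one to match the left neighbour and one (different) to match the right neighbour, so the interval still has all its symbols distinct and its internal layout is unconstrained by the two endpoint requirements simultaneously. I would then verify by induction on the number of processed intervals that after step $k$ the partial array has no junction break among the first $k$ intervals, and that the choice made for interval $k$ never forces a break at junction $k$ with interval $k+1$. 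This establishes $d_A(T',D) = n - \#(A(D))$, matching the lower bound, hence optimality.

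Finally, for the complexity claim I would note that the algorithm makes a single left-to-right pass over the $\#(A(D))$ intervals; for each interval it consults the two relevant entries of $B(T,D)$, selects the matching symbols, and writes out $\acro{word}$ of the remaining symbols in sorted order. Each of these operations on an interval of size $s$ costs $O(s\times\#(\Sigma))$ at worst (sorting the at most $\#(\Sigma)$ symbols, or a radix-style pass), and $\sum s = n = \#(T)$, so the total running time is $O(\#(T)\times\#(\Sigma))$; I would also remark that $B(T,D)$ itself is precomputable within the same bound. The main obstacle I anticipate is the induction in the second paragraph: one must be careful that the symbol chosen to glue interval $k$ to interval $k+1$ remains available — i.e.\ is distinct from the symbol used to glue to interval $k-1$ — which is exactly guaranteed by $\#(B(T,D)[k])\ge 2$ together with the fact that, since all symbols of $A(D)[k]$ are distinct, the ``left'' gluing symbol is forced to be a single specific element and the ``right'' gluing symbol can be chosen among the remaining elements of $B(T,D)[k]$; pinning down this interplay cleanly, and handling the degenerate situation where $A(D)[k]$ has only one or two elements, is the delicate part of the argument.
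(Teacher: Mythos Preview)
Your proposal is correct and follows essentially the same approach as the paper: establish the lower bound $d_A(T',D)\ge \#(T)-\#(D)$ from the fact that all symbols inside each interval are distinct (so only the $\#(D)-1$ junctions can be non-breaks), then show Algorithm~\ref{algo:mpt} attains this bound because the hypothesis $\#(B(T,D)[i])\ge 2$ guarantees $B(T,D)[i]\setminus\{last\}\neq\emptyset$, so a valid gluing symbol (distinct from the previous one) is always available. The paper's proof is considerably terser than yours --- it dispatches the availability of the gluing symbol in a single line and does not spell out the induction --- and your worry about the ``degenerate'' two-element intervals is unnecessary under the hypothesis, but the substance is the same.
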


\begin{algorithm}[ht]
  \SetKwInOut{Input}{Input}
  \SetKwInOut{Output}{Output}
  \Input{An instance of \mpt $T$ and $D$}
  \Output{A string $T'$}

  $last \leftarrow  \$$ such that $\$ \notin \Sigma$\;
  $T' \leftarrow $ empty string\;

  \For{$i \in \inter{1}{\#(B(T,D))}$}{
    $lettre \leftarrow $ random element of  $B(T,D)[i] \setminus \{last\}$\;
    $T' \leftarrow T'\; \acro{word}(\acro{Char}(A(D)[i])\setminus \{lettre,last\})$\;
    $T' \leftarrow T'\;  lettre$\;
    $T' \leftarrow T'\;  lettre$\;
    $last \leftarrow lettre$\;
  }
  $T' \leftarrow T'\; \acro{word}(\acro{Char}(A(D)[\#(A(D))])\setminus \{last\})$\;
  \Return $T'$\;

  \caption{Computation of an array $T'$ of symbols satisfying for any $\inter{i}{j} \in D$, $\acro{Char}(T[i,j]) = \acro{Char}(T'[i,j])$.\label{algo:mpt}}
\end{algorithm}

\begin{proof}
  \textbf{Complexity} To build the tables $A(D)$ and $B(T,D)$, we need $O(\#(T)\times \#(\Sigma))$ in time. As the size of these two tables are smaller than the size of $T$, the loop \textbf{for} of Algorithm~\ref{algo:mpt} takes also $O(\#(T)\times \#(\Sigma))$ in time.

  \textbf{Optimality} 
  As for all $i$ in $\inter{1}{\#(B(T,D))}$, $\#(B(T,D)[i]) \geq 2$, we have $B(T,D)[i] \setminus \{last\} \neq \emptyset$.
  Let $T^{\ast}$ be a string such that for all $\inter{i}{j} \in D$, $\acro{Char}(T[i,j]) = \acro{Char}(T^{\ast}[i,j])$. The size of $T^{\ast}$ is $\#(T)$ and the number of interval of $D$ is $\#(D)$, \ie\ the maximum number of positions where two consecutive letters can be identical. Hence, we have $d_A(T^{\ast},D) \leq \#(T) - \#(D) + 1$. Let $T'$ be the string given by Algorithm~\ref{algo:mpt}. We have $d_A(T',D) = \sum_{u \in D} (\#(u)-1) + 1 = \#(T) - \#(D) + 1$. This concludes the proof.
\end{proof}

\begin{lemma}\label{le:opt:mpt:all}
  Let $T$ be an array of $n$ symbols of $\Sigma$ and let $D$ be an integer interval partition of $\inter{1}{n}$ such that for each interval $\inter{i}{j}$ of $D$, $\acro{Char}(T[i,j]) = j-i+1$. The problem \mpt can be solved in linear time in $\#(T)\times \#(\Sigma)$.
\end{lemma}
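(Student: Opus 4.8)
The plan is to combine the structural case analysis of Lemma~\ref{le:decomp:mpt:all} with the base case handled by Lemma~\ref{le:opt:mpt}, using a divide-and-conquer recursion that, crucially, does only a linear amount of total work. First I would compute the two auxiliary tables $A(D)$ and $B(T,D)$, which takes $O(\#(T)\times\#(\Sigma))$ time as already argued in the proof of Lemma~\ref{le:opt:mpt}. Then I would scan $B(T,D)$ looking for a splitting position: either an index $i$ with $\#(B(T,D)[i]) = 0$, or an index $i$ with $\#(B(T,D)[i]) = 1$, or (as a degenerate sub-case) a singleton interval $\inter{i}{i}\in D$. If no such position exists, then every $B(T,D)[i]$ has cardinality at least $2$, and Lemma~\ref{le:opt:mpt} directly yields an optimal solution via Algorithm~\ref{algo:mpt} in $O(\#(T)\times\#(\Sigma))$ time. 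Otherwise, Lemma~\ref{le:decomp:mpt:all} tells us precisely how to split the instance $(T,D)$ into two strictly smaller sub-instances $(T_1,D_1)$ and $(T_2,D_2)$ whose optimal solutions, once concatenated (possibly with a fixed short connector $aa$ inserted, in the $\#(B(T,D)[i])=1$ case), form an optimal solution of the original instance.

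The key point for the complexity bound is that the two sub-instances produced by Lemma~\ref{le:decomp:mpt:all} overlap $(T,D)$ in only $O(1)$ positions (they share at most the single splitting interval, which is itself replaced by a shortened interval), so the sum of the sizes of the sub-instances exceeds $\#(T)$ by only an additive constant per split; since each split also eliminates at least one interval of $D$, there are at most $\#(D) \le \#(T)$ splits, and the total extra size accumulated over the whole recursion is $O(\#(T))$. Hence the combined size of all leaf sub-instances (the ones handled by Lemma~\ref{le:opt:mpt}) is $O(\#(T))$, and each is processed in time linear in its size times $\#(\Sigma)$. One must still be careful that finding a splitting position can be done amortised-linearly over the whole recursion; this is achieved by maintaining, for each recursive call, the relevant slice of $B(T,D)$ rather than recomputing it, so that the scans over $B$ across all recursive calls sum to $O(\#(T))$. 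Putting these together gives a total running time of $O(\#(T)\times\#(\Sigma))$, which is the claimed bound.

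The correctness of the recursion is immediate from Lemma~\ref{le:decomp:mpt:all}: each of its three cases asserts that an optimal solution of $(T,D)$ is obtained by concatenating optimal solutions of the two displayed sub-instances (with the explicit connector in the cardinality-$1$ case), and the base case is exactly Lemma~\ref{le:opt:mpt}. An easy induction on $\#(D)$ then shows the algorithm returns an optimal $T'$: the base case ($\#(D)=1$, or more generally when no splitting position exists) is Lemma~\ref{le:opt:mpt}, and the inductive step applies one case of Lemma~\ref{le:decomp:mpt:all} to two instances with strictly fewer intervals, to which the induction hypothesis applies.

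The main obstacle I anticipate is the bookkeeping needed to make the recursion genuinely linear rather than, say, $O(\#(T)^2\times\#(\Sigma))$: a naive implementation that recomputes $A(D)$ and $B(T,D)$ from scratch at every recursive call, or that rescans $B$ from the start each time, would blow up the running time. The fix is to pass down index ranges into the shared arrays $A(D)$ and $B(T,D)$ and to advance a single pointer through $B(T,D)$ while searching for splitting positions, so that the work at each recursive call is proportional to the number of intervals it removes plus a constant; the $\acro{word}(\cdot)$ concatenations charged to each split contribute at most $\#(\Sigma)$ each, and there are $O(\#(D))$ of them. Once this amortisation is set up, the linear bound follows, and Theorem~\ref{th:mpx:opt} then follows by combining Lemma~\ref{le:opt:mpt:all} with Lemma~\ref{le:equi:xbw:table} and Proposition~\ref{prop:leq:topo:good} applied to $\XBWT(\ACT(P))$ and $\Decompb(\ACT(P))$.
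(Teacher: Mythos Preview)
Your proposal is correct and follows essentially the same approach as the paper: split the instance using the three cases of Lemma~\ref{le:decomp:mpt:all}, handle the resulting pieces with Lemma~\ref{le:opt:mpt} via Algorithm~\ref{algo:mpt}, and concatenate. The paper's own proof is a one-line sketch of exactly this idea; your version supplies the amortised complexity bookkeeping (sharing $A(D)$ and $B(T,D)$ across recursive calls, bounding the number of splits by $\#(D)$) that the paper leaves implicit.
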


\begin{proof}
  By Lemma~\ref{le:opt:mpt} and Lemma~\ref{le:decomp:mpt:all}, we can compute an optimal solution of \mpt by cuting the interval, applying Algorithm~\ref{algo:mpt} on each part, and then merge the strings output by Algorithm~\ref{algo:mpt}.
\end{proof}

\section{Conclusion and Perspectives}


{
Here, we present a new view of the Burrows-Wheeler Transform: as the text representation of an Aho-Corasick automaton that depends on the concatenation order. This induces a link between the Burrows-Wheeler Transform and the eXtended Burrows-Wheeler Transform, via the Aho-Corasick automaton. This link allows one to transform one structure into the other (for which we provide algorithms). We also exploit this link to find in linear time an ordering of input strings that optimises the compression of the concatenated strings.\footnote{An implementation of these algorithms can be found in \url{https://framagit.org/bcazaux/compressbwt}}
}
\bibliography{esa}

\newpage
\section*{Appendix}

\section*{Proof of Lemma~\ref{le:lrs}}

As for all $i \in \inter{1}{\taille{w}}$, $\SA(w)[\LF(w)[i]] = \SA(w)[i]-1$ if $\SA(w)[i] \geq 2$ and $\SA(w)[\LF(w)[i]] = \taille{w}$ otherwise, we have
\begin{itemize}
\item if $\SA(w)[i] \geq 2$,
  \[
    \begin{array}{rcl}
      \LRS(P)[\LF(P)[i]] & = & \Select_{\$}(m_P[\SA(m_P)[\LF(P)[i]]:\taille{m_P}],1) - 1 \\
                         & = & \Select_{\$}(m_P[\SA(w)[i]-1:\taille{m_P}],1) - 1 \\
                         & = & 
                               \begin{cases}
                                 0 & \text{if } m_P[\SA(w)[i]-1] = \$, \\
                                 \LRS(P)[i]+1 & \text{otherwise}.
                               \end{cases}
    \end{array}
  \]
\item If $\SA(w)[i] = 1$,
  \[
    \begin{array}{rcl}
      \LRS(P)[\LF(P)[i]] & = & \Select_{\$}(m_P[\SA(m_P)[\LF(P)[i]]:\taille{m_P}],1) - 1 \\
                         & = & \Select_{\$}(m_P[\taille{m_P}:\taille{m_P}],1) - 1 \\
                         & = & 0.
    \end{array}
  \]
\end{itemize}

We define the function $\acro{Letter}$ from $\inter{1}{\taille{m_P}}$ to $\Sigma$ such that $C[\acro{Letter}[i]] < i \leq C[\acro{Letter}[i+1]]$ (see definition of \LF) where $\Sigma = \{c_1 \ldots, c_{\#(\Sigma)} \}$ with $c_1 < \ldots < c_{\#(\Sigma)}$. 
Thus, we define $\acro{RLF}(P)$ from $\inter{1}{\taille{m_P}}$ to $\inter{1}{\taille{m_P}}$ such that 
\[
  \acro{RLF}(P)[i] = \Select_{\acro{Letter}[i]}(\BWT(P),i-C[\acro{Letter}[i]]).
\]
Let $i$ be an integer between $1$ and $\taille{m_P}$. We have
\[
  \begin{array}{rcl}
    \acro{RLF}(P)[\LF(P)[i]] & = & \Select_{\acro{Letter}[\LF(P)[i]]}(\BWT(P),\LF(P)[i]-C[\acro{Letter}[\LF(P)[i]]])\\ 
                             & = & \Select_{\BWT(P)[i]}(\BWT(P),\LF(P)[i]-C[\BWT(P)[i]])\\
                             & = & \Select_{\BWT(P)[i]}(\BWT(P),\Rank_{\BWT(w)[i]}(\BWT(w),i))\\ 
                             & = & i. 
  \end{array}
\]
Hence, the function $\acro{RLF}(P)$ is the reverse bijection of $\LF(P)$, and as $m_P[\SA(m_P)[i]-1] = \BWT(P)[i]$, one gets
\[
  \begin{array}{rcc}
    LRS(P)[i]  = 0 & \Leftrightarrow & \BWT(P)[\acro{RLF}(P)[i]] = \$ \\
                   & \Leftrightarrow & \BWT(P)[\Select_{\acro{Letter}[i]}(\BWT(P),i-C[\acro{Letter}[i]])] = \$ \\
                   & \Leftrightarrow & \acro{Letter}[i] = \$ \\
                   & \Leftrightarrow & i \in \inter{1}{\#(P.S)}\\
  \end{array}
\]
Therefore, we derive the following equality
\[
  \LRS(P)[i] = 
  \begin{cases}
    0 & \text{if } i \in \inter{1}{\#(P.S)},\\
    \LRS(P)[\LF(P)[i]]-1 & \text{otherwise}.
  \end{cases}
\]

\section*{Proof of Proposition~\ref{prop:linear:lrs}}

As the value of each position of the table $\LCP(P)$ corresponds to the minimum between the values of same position of $\LCP(m_P)$ and of $\LRS(P)$, we only need to proove that the table $\LRS(P)$ can be computed in linear time from $\BWT(P)$.

Using Algorithm~\ref{algo:lrs} and Lemma~\ref{le:lrs}, we can compute table $\LRS(P)$ in $O(\taille{m_P})$ time.

\begin{algorithm}[ht]
  \SetKwInOut{Input}{Input}
  \SetKwInOut{Output}{Output}
  \Input{The ordered set of strings $P$}
  \Output{The table $\LRS(P)$}

  We compute the table $\BWT(P)$\;
  We build the empty table $LRS$ of length $\taille{m_P}$\;

  \For{$i \in \inter{1}{\#(P.S)}$}{
    $position \leftarrow i$\;
    $nb \leftarrow 0$\;
    $LRS[position] \leftarrow nb$\;

    \While{$\BWT(P)[position] \neq \$ $}{
      $LRS[position] \leftarrow nb$\;
      $nb \leftarrow nb+1$\;
      $position \leftarrow \LF[position]$\;	
    }
  }
  \Return $LRS$\;

  \caption{Computation of table $\LRS(P)$.\label{algo:lrs}}
\end{algorithm}

\subsection*{Proof of Proposition~\ref{prop:bij:decomp:prefix}}

We begin by giving the following Lemma.
\begin{lemma}\label{le:value:same:decomp}
  Let $u =\inter{i}{j} \in \Decomp(P)$. For all $k \in \inter{i}{j}$, \[\acro{Dec\_Pre}[u] \; = \; \overleftarrow{m_P[\SA(m_P)[k]:\SA(m_P)[k]+\LRS(P)[k]-1]}.\]
\end{lemma}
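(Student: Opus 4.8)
The plan is to reduce the statement to a claim about forward (un-reversed) substrings, then to a single-step claim about consecutive ranks, which I prove by a direct lexicographic argument.

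For $k \in \inter{i}{j}$ write $S_k := m_P[\SA(m_P)[k]:\taille{m_P}]$ for the rank-$k$ suffix of $m_P$, and $w_k := m_P[\SA(m_P)[k]:\SA(m_P)[k]+\LRS(P)[k]-1]$, so that $w_k$ is the prefix of $S_k$ of length $\LRS(P)[k]$. By the definition of $\LRS(P)$, the string $w_k$ contains no $\$$ and is immediately followed in $S_k$ by a $\$$; in particular, since every suffix of $m_P$ ends with $\$$, the quantity $\LRS(P)[k]$ is well defined and finite and $w_k$ is a \emph{proper} prefix of $S_k$. Since $\acro{Dec\_Pre}[u] = \overleftarrow{w_i}$ by definition and reversal is injective, it suffices to show $w_k = w_i$ for every $k \in \inter{i}{j}$. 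By chaining along $k = i, i+1, \dots, j$, this reduces to the single-step claim: \emph{if $\LCP(P)[k] = \LRS(P)[k]$, then $w_{k-1} = w_k$}; and by the defining condition of $\Decomp(P)$, the hypothesis $\LCP(P)[k] = \LRS(P)[k]$ holds for exactly the indices $k \in \inter{i+1}{j}$, which is precisely the range used in the chaining.

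To prove the single-step claim, I use $\LCP(P)[k] = \min(\LCP(m_P)[k],\LRS(P)[k])$: the hypothesis is equivalent to $\LCP(m_P)[k] \geq \LRS(P)[k]$, i.e. $S_{k-1}$ and $S_k$ agree on their first $\LCP(m_P)[k] \geq |w_k|$ symbols. Hence $w_k$ is also a prefix of $S_{k-1}$, and since $w_k$ is $\$$-free the first $\$$ of $S_{k-1}$ occurs no earlier than position $\LRS(P)[k]+1$, so $\LRS(P)[k-1] \geq \LRS(P)[k]$ and $w_{k-1}$ extends $w_k$. If this inequality were strict, $S_{k-1}$ would carry a non-$\$$ symbol at position $\LRS(P)[k]+1$ (this position exists, as $S_{k-1}$ ends with $\$ \notin w_k$ and is therefore strictly longer than its prefix $w_k$) while $S_k$ carries $\$$ there; as $\LRS(P)[k]+1 \leq \LCP(m_P)[k]+1$, this position is at most the first at which $S_{k-1}$ and $S_k$ differ, but $S_{k-1}$ precedes $S_k$ in the suffix array and $\$$ is the globally smallest symbol, so $S_{k-1}$ cannot hold a strictly smaller symbol there — a contradiction. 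Therefore $\LRS(P)[k-1] = \LRS(P)[k]$, and since $w_{k-1}$ and $w_k$ are prefixes of $S_{k-1}$ of equal length they coincide. Combining the steps yields $w_i = w_{i+1} = \cdots = w_j$, hence $\overleftarrow{w_k} = \acro{Dec\_Pre}[u]$ for all $k \in \inter{i}{j}$.

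The point needing the most care is this single-step claim, specifically ruling out that $w_{k-1}$ \emph{strictly} extends $w_k$: the argument hinges on $\$$ being the globally minimum symbol, on the suffixes of $m_P$ being strictly ordered, and on the fact that a $\$$-free string is never a suffix of $m_P$. Everything else is index bookkeeping.
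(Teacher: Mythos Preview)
Your proof is correct and follows essentially the same route as the paper: both arguments establish, for each $k\in\inter{i+1}{j}$, that $\LRS(P)[k-1]=\LRS(P)[k]$ (equivalently, the paper's $\LRS(P)[k-1]=\LCP(P)[k]$) by a two-case lexicographic analysis exploiting that $\$$ is the minimum symbol, and then read off $w_{k-1}=w_k$. Your presentation is slightly more explicit about the final step (equal-length prefixes of $S_{k-1}$ coincide), which the paper leaves implicit, but the substance is the same.
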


\begin{proof}
  Let us show by contraposition that $\LRS(P)[k-1] = \LCP(P)[k]$ for all $k \in \inter{i+1}{j}$. Assume that there exists $k \in \inter{i+1}{j}$ such that $\LRS(P)[k-1] \neq \LCP(P)[k]$. 
  Whenever $\LRS(P)[k-1] < \LCP(P)[k]$, we get by definition that $\LRS(P)[k] \geq \LCP(P)[k]$, and thus $\LRS(P)[k-1] < \LRS(P)[k]$. By the definiton of $\LRS(P)$, $m_P[\SA(m_P)[k-1]+\LRS(P)[k-1]] = \$ $. By the definition of $\LCP(P)$, for all $j \in \inter{1}{\LCP(P)[k-1]-1}$, $m_P[\SA(m_P)[k]+j] = m_P[\SA(m_P)[k-1]+j]$. As $\LRS(P)[k-1] < \LCP(P)[k]$, we have $m_P[\SA(m_P)[k]+\LRS(P)[k-1]] = m_P[\SA(m_P)[k-1]+\LRS(P)[k-1]] = \$ $, which is impossible since $\LRS(P)[k-1] < \LRS(P)[k]$.
  Whenever $\LRS(P)[k-1] > \LCP(P)[k]$, as $\LCP(P)[k] = \LRS(P)[k]$, the string $m_P[\SA(m_P)[k]:\taille{m_P}]$ is lexicographically strictly smaller than $m_P[\SA(m_P)[k-1]:\taille{m_P}]$, which is impossible.
  This concludes the proof.
\end{proof}

  Let $u = \inter{i}{j}$ be an interval of $\Decomp(P)$. First, we prove that $\acro{Dec\_Pre}[u] \in \Prefix(\overleftarrow{P.S})$, and then to prove the bijection, we show $\acro{Dec\_Pre}$ is injective and surjective. By definition, $\LRS(P)[i] = \Select_{\$}(m_P[\SA(m_P)[i]:\taille{m_P}],1) - 1$, we have $m_P[\SA(m_P)[i]+\LRS(P)[i]]= \$ $ and for all $j \in \inter{\SA(m_P)[i]}{\SA(m_P)[i]+\LRS(P)[i]-1}$, $m_P[j] \neq \$ $. Thus, $m_P[\SA(m_P)[i]:\SA(m_P)[i]+\LRS(P)[i]-1]$ is a suffix of a string $w$ of $P.S$ and thus $\acro{Dec\_Pre}[u]$ is a prefix of $\overleftarrow{w}$ in $\overleftarrow{P.S}$.

  Let $u_1 = \inter{i_1}{j_1}$ and $u_2 = \inter{i_2}{j_2}$ be two elements of $\Decomp(P)$. Without loosing generality, we take $i_1 \leq i_2$.
  Assume that $\acro{Dec\_Pre}[u_1] = \acro{Dec\_Pre}[u_2]$, we have that 
  $m_P[\SA(m_P)[i_1]:\SA(m_P)[i_1]+\LRS(P)[i_1]-1] = m_P[\SA(m_P)[i_2]:\SA(m_P)[i_2]+\LRS(P)[i_2]-1]$ and thus for all $k \in \inter{i_1}{i_2}$, $m_P[\SA(m_P)[i_1]:\SA(m_P)[i_1]+\LRS(P)[i_1]-1] = m_P[\SA(m_P)[k]:\SA(m_P)[k]+\LRS(P)[k]-1]$. Hence, we have $\LCP(P)[k] = \LRS(P)[i_1]$ and $\LRS(P)[k] = \LRS(P)[i_1]$. Therefore by the definition of $\Decomp(P)$, we get $u_1 = u_2$. 

  Let $v$ be a prefix of a string of $\overleftarrow{P.S}$. The string $\overleftarrow{v}$ is a suffix of a string of $P.S$. By the definition of $m_P$, $\overleftarrow{v}$ is a prefix of a suffix $s$ of $m_P$ such that $s[\taille{\overleftarrow{v}}+1] = \$ $. By the definition of $\BWT(P)$, the table $\SA(m_P)$ gives for a position $i$ the starting position of the $i^{th}$ suffix of $m_P$ in lexicographic order. Hence, there is a bijection between  $\Suffix(m_P)$ and the set of positions in $\SA(m_P)$. Let $k \in \inter{1}{\taille{m_P}}$ such that $s =  m_P[\SA(m_P)[k]:\taille{m_P}]$. As $\overleftarrow{v}$ is a suffix of a string of $P.S$ and a prefix of $s$, $\overleftarrow{v} =  m_P[\SA(m_P)[k]:\SA(m_P)[k]+\LRS(P)[k]-1]$. We take $u = \inter{i}{j} \in \Decomp(P)$ such that $k \in \inter{i}{j}$. By Lemma~\ref{le:value:same:decomp}, $\overleftarrow{v} = \acro{Dec\_Pre}[u]$.

\subsection*{Proof of Proposition~\ref{prop:iso:act}}

  First, we show that there exists a bijection between the node set of $\ACT(\overleftarrow{P.S})$ and that of $G_T(P)$. We reuse the bijection $\acro{Dec\_Pre}$, which served in Proposition~\ref{prop:bij:decomp:prefix}.
  Let us show that for each arc $(u,v)$ of $A_T(P)$, $(\acro{Dec\_Pre}[u],\acro{Dec\_Pre}[v])$ is an arc of $\ACT(\overleftarrow{P.S})$, and vice versa.
  Let be $(u,v) \in A_T(P)$, i.e. $(u,v) \in \Decomp(P)^2$ such that there exists $x \in u$ with $\LF(P)[x] \in v$. 


  By the work of~\cite{Burrows94}, we know that $\SA(m_P)[\LF(m_P)[i]] = \SA(m_P)[i]-1$. By Lemma~\ref{le:lrs}, we have $\LRS(P)[\LF(P)[x]] = \LRS(P)[x] + 1$ for all $x$ such that $\LF(P)[x] \neq 0$. With both equalities and Lemma~\ref{le:value:same:decomp}, we obtain
  \[
    \begin{array}{rcl}
      \acro{Dec\_Pre}[v] & = & \overleftarrow{m_P[\SA(m_P)[\LF(P)[x]]:\SA(m_P)[\LF(P)[x]]+\LRS(P)[\LF(P)[x]]-1]} \\
                         & = & \overleftarrow{m_P[\SA(m_P)[x]-1:\SA(m_P)[x]-1+\LRS(P)[\LF(P)[x]]-1]} \\
                         & = & \overleftarrow{m_P[\SA(m_P)[x]-1:\SA(m_P)[x]-1+\LRS(P)[x]]} \\
                         & = & \overleftarrow{m_P[\SA(m_P)[x]:\SA(m_P)[x]+\LRS(P)[x]-1]}\ m_P[\SA(m_P)[x]-1]\\
                         & = & \acro{Dec\_Pre}[u]\ m_P[\SA(m_P)[x]-1].
    \end{array}
  \]
  The string $\acro{Dec\_Pre}[u]$ is thus the longest prefix of $\acro{Dec\_Pre}[v]$.

  Let $(x,y)$ be an arc of $\ACT(\overleftarrow{P.S})$. We take $z$ a leaf in the subtree of $\ACT(\overleftarrow{P.S})$ in $y$. As $x$ is the parent of $y$ in $\ACT(\overleftarrow{P.S})$, $z$ is also a leaf in the subtree of $\ACT(\overleftarrow{P.S})$ in $x$. We take $i \in \inter{1}{\taille{m_P}}$ such that $\overleftarrow{z}$ is a prefix of $m_P[i:\taille{m_P}]$. Hence $\overleftarrow{y}$ is a prefix of $m_P[i+\taille{z}-\taille{y}:\taille{m_P}]$ and $\overleftarrow{x}$ is a prefix of $m_P[i+\taille{z}-\taille{x}:\taille{m_P}]$. As $(x,y)$ is an arc of $\ACT(\overleftarrow{P.S})$, $\taille{y} - \taille{x} = 1$. Thus by choosing $k$ such that $\SA(m_P)[k] = i+\taille{z}-\taille{y}+1$, and $u,v$ in $\Decomp(P)^2$ such that $k\in u$ and $\LF(P)[k] \in v$, we get $\overleftarrow{x} = \acro{Dec\_Pre}[u]$ and $\overleftarrow{y} = \acro{Dec\_Pre}[v]$.
  This concludes the proof.

\subsection*{Proof of Proposition~\ref{prop:iso:acfl}}

  First, let us show the following equivalence. Let be $u = \inter{i}{j} \in \Decomp(P)$.
  \[
    \begin{array}{c}
      w \in \Decomp(P) \text{ such that } \exists k \in w \text{ with } k<i \text{ and } \LRS(P)[k] = \min_{k-1 \leq l \leq i}(\LCP(P)[l]) \\
      \Leftrightarrow \\
      \acro{Dec\_Pre}[w] \text{ is a suffix of } \acro{Dec\_Pre}[u].
    \end{array}
  \]
  Let be $w \in \Decomp(P)$  such that $\exists k \in w \text{ with } k<i \text{ and } \LRS(P)[k] = \min_{k-1 \leq l \leq i}(\LCP(P)[l])$.
  Hence, we have for all $l \in \inter {k-1}{i}$, $\LRS(P)[k] \leq \LCP(P)[l]$, and thus 
  \[
    \begin{array}{rcl}
      \acro{Dec\_Pre}[u] & = & \overleftarrow{m_P[\SA(m_P)[i]:\SA(m_P)[i]+\LRS(P)[i]-1]}\\
                         & = & \overleftarrow{m_P[\SA(m_P)[i]+\LRS(P)[k]:\SA(m_P)[i]+\LRS(P)[i]-1]}\\
                         & & \overleftarrow{m_P[\SA(m_P)[i]:\SA(m_P)[i]+\LRS(P)[k]-1]}\\
                         & = & \overleftarrow{m_P[\SA(m_P)[i]+\LRS(P)[k]:\SA(m_P)[i]+\LRS(P)[i]-1]}\\
                         & & \overleftarrow{m_P[\SA(m_P)[k]:\SA(m_P)[k]+\LRS(P)[k]-1]}\\
                         & = & \overleftarrow{m_P[\SA(m_P)[i]+\LRS(P)[k]:\SA(m_P)[i]+\LRS(P)[i]-1]}\ \acro{Dec\_Pre}[w].
    \end{array}
  \]
  Let $w = \inter{i_1}{j_1}$ and $u = \inter{i_2}{j_2}$ be two elements of $\Decomp(P)$ such that $\acro{Dec\_Pre}[w]$ is a suffix of $\acro{Dec\_Pre}[u]$. Hence, we have that $m_P[\SA(m_P)[i_1]:\SA(m_P)[i_1]+\LRS(P)[i_1]-1]$ is a prefix of $m_P[\SA(m_P)[i_2]:\SA(m_P)[i_2]+\LRS(P)[i_2]-1]$. By the definition of $\BWT(P)$, for all $l \in \inter {i_1}{i_2}$, $\LRS(P)[i_1] \leq \LCP(P)[l]$. This concludes the proof of the equivalence.
  By the equivalence, given $u$ and $v$ in $\Decomp(P)$ such that $\acro{Dec\_Pre}[u]$ is a suffix of $\acro{Dec\_Pre}[v]$, for all $k_1 \in u$ and $k_2 \in v$, we have $k_1 \leq k_2$. Hence, by taking the largest $w$ satisfying the first step of the inequality, we obtain the longest suffix and vice versa.

\subsection*{Proof of Corollary~\ref{cor:build:bwt:xbw}}

\subsubsection*{From \BWT to \XBW}

Let $P$ be an ordered set of strings. To compute tables $\XBWT(\ACT(\overleftarrow{P}))$ and $\XBWL(\ACT(\overleftarrow{P}))$ using only $\BWT(P)$, $\LCP(P)$ and $\LRS(P)$, we first define a new table $\BWD(P)$.

The \emph{Burrows-Wheeler Decomposition} of $P$, denoted by $\BWD(P)$, is the array of length $\#(\Decomp(P))$ such that for each position $i$, $\BWD(P)[i]$ is the cardinality of the $i^{th}$ element of $\Decomp(P)$ in interval order.

\begin{lemma}\label{le:build:bwd}
Using tables $\LCP(P)$ and $\LRS(P)$, Algorithm~\ref{algo:bwd} computes $\BWD(P)$ in linear time in $\norme{P.S}$ and the table $\BWD(P)$ can be stored with $\norme{P.S} \times log(\#(P.S))$ bits.
\end{lemma}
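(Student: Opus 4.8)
The statement (Lemma \ref{le:build:bwd}) has two parts: a time bound for an algorithm computing $\BWD(P)$, and a space bound on the table itself. I would tackle the space bound first since it is immediate: $\BWD(P)$ has one entry per element of $\Decomp(P)$, and by Proposition \ref{prop:bij:decomp:prefix} there are $\#(\Prefix(\overleftarrow{P.S})) \leq \norme{P.S}$ such elements; each entry is the cardinality of an interval contained in $\inter{1}{\taille{m_P}}$, and since the intervals partition $\inter{1}{\taille{m_P}}$ and each contains at most one position per string of $P.S$ (two positions in the same interval correspond to suffixes sharing the same longest representative suffix, hence coming from distinct strings), each value is at most $\#(P.S)$ and fits in $\log(\#(P.S))$ bits. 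Summing over at most $\norme{P.S}$ entries gives the claimed $\norme{P.S}\times\log(\#(P.S))$ bits.

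**The algorithm and its correctness.** For the time bound I would describe Algorithm \ref{algo:bwd} as a single left-to-right scan over positions $i \in \inter{1}{\taille{m_P}}$: maintain a running counter for the current interval, and by the definition of $\Decomp(P)$, a new interval starts exactly at those positions $k$ where $\LCP(P)[k] \neq \LRS(P)[k]$. So at each step compare $\LCP(P)[i]$ with $\LRS(P)[i]$; if they are equal, increment the current counter; if not, close the current interval (append its count to $\BWD(P)$) and open a new one. This uses $O(1)$ work per position of $m_P$, hence $O(\taille{m_P}) = O(\norme{P.S} + \#(P.S)) = O(\norme{P.S})$ time, assuming the input tables $\LCP(P)$ and $\LRS(P)$ are given. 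Correctness is just unfolding the definition of $\Decomp(P)$: the boundary condition "$\LCP(P)[k]\neq\LRS(P)[k]$ for $k\in\{i,j+1\}$ and $\LCP(P)[k]=\LRS(P)[k]$ for $k\in\inter{i+1}{j}$" is exactly the statement that the interval $\inter{i}{j}$ is a maximal run of positions where $\LCP=\LRS$ starting at a position where they differ — which is precisely what the scan detects.

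**Main obstacle.** The genuinely non-routine point is the space bound, specifically the claim that every entry of $\BWD(P)$ is at most $\#(P.S)$, i.e.\ that no element of $\Decomp(P)$ has cardinality exceeding the number of input strings. I would argue this via Lemma \ref{le:value:same:decomp}: all positions $k$ in a single interval $u\in\Decomp(P)$ share the same value $\acro{Dec\_Pre}[u]$, which (by the proof of Proposition \ref{prop:bij:decomp:prefix}) is the reverse of a common suffix $\overleftarrow{m_P[\SA(m_P)[k]:\SA(m_P)[k]+\LRS(P)[k]-1]}$ immediately followed by a $\$$ in $m_P$. Two distinct positions $k\neq k'$ in $u$ then correspond to two distinct occurrences of this string-followed-by-$\$$, and each string of $P.S$ contributes exactly one such occurrence (the occurrence ending at its unique trailing $\$$ in $m_P$, for the suffix of that length), so $\#(u)\leq\#(P.S)$. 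Everything else — the $O(\taille{m_P})$ scan, the bit accounting — is a direct consequence of definitions and of results already proved, so the write-up is short.
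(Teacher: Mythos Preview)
Your proposal is correct and matches the paper's proof essentially line for line: the algorithm is the same single left-to-right scan detecting interval boundaries at positions where $\LCP(P)[i]\neq\LRS(P)[i]$, and the space bound is obtained by the same observation that $\BWD(P)[i]$ counts the strings of $P.S$ having $\overleftarrow{\acro{Dec\_Pre}[u]}$ as a suffix, hence is at most $\#(P.S)$. Your write-up of the ``main obstacle'' is in fact more detailed than the paper's, which simply asserts that $\BWD(P)[i]$ equals this count without spelling out the one-occurrence-per-string argument you give via Lemma~\ref{le:value:same:decomp}.
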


\begin{proof}[Proof of Lemma~\ref{le:build:bwd}]
For each $i$ in $\inter{1}{\taille{m_P}}$, at the begining of the loop \textbf{for}, we have that $\LCP(P)[i-j] \neq \LRS(P)[i-j]$ and for all $k \in \inter{i-j+1}{i-1}$, $\LCP(P)[k] \neq \LRS(P)[k]$. Hence, if $\LCP(P)[i] \neq \LRS(P)[i]$, the interval $\inter{i-j}{i-1}$ is an element of $\Decomp(P)$ and the cardinality of $\inter{i-j}{i-1}$ is $j$. Otherwise, we increase $j$ by $1$ because the position $i$ does not correspond to a new interval of $\Decomp(P)$.
For the complexity, as each step of the loop can be computed in constant time, Algorithm~\ref{algo:bwd} computes $\BWD(P)$ in linear time in $\norme{P.S}$.
As for each position $i$ of $\BWD(P)$, $\BWD(P)[i]$ represents the number of strings of $P.S$ having as suffix $\acro{Dec\_Pre}[u]$, where $u$ is the $i^{th}$ element of $\Decomp(P)$ sorted in interval order, it follows that $\BWD(P)[i] \leq \#(P.S)$. This concludes the proof.
\end{proof}

\begin{algorithm}[ht]
  \SetKwInOut{Input}{Input}
  \SetKwInOut{Output}{Output}
  \Input{The tables $\LCP(P)$ and $\LRS(P)$}
  \Output{The table $\BWD(P)$}

	$BWD \leftarrow $ empty list\;

    $j \leftarrow 0$\;
  \For{$i \in \inter{1}{\taille{m_P}}$}{
	\uIf{$\LCP(P)[i] \neq \LRS(P)[i]$}{
   		add $j$ to the end of $BWD$ \;
   		$j \leftarrow 1$\;
  	}
  	\Else{
    	$j \leftarrow j+1$\;
  	} 
  }
  \Return $BWD$\;

  \caption{Computation of table $\BWD(P)$.\label{algo:bwd}}
\end{algorithm}

\begin{lemma}\label{le:go:bwd:xbwd}
Using tables $\BWT(P)$ and $\BWD(P)$, Algorithm~\ref{algo:bwt:xbw} computes the tables $\XBWT(\ACT(\overleftarrow{P}))$ and $\XBWL(\ACT(\overleftarrow{P}))$ in linear time of $\norme{P.S} \times \#(\Sigma)$.
\end{lemma}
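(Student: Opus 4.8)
The plan is to realise Algorithm~\ref{algo:bwt:xbw} as a single left-to-right sweep over $\BWT(P)$, driven by the block structure encoded in $\BWD(P)$, and to prove it correct by reading off, block by block, the dictionary supplied by Theorem~\ref{th:link:bwt:xbw}.

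First I would recover the intervals of $\Decomp(P)$ in interval order from $\BWD(P)$ alone: the prefix sums of $\BWD(P)[1],\BWD(P)[2],\ldots$ give the successive endpoints $\inter{i_1}{j_1}<\inter{i_2}{j_2}<\cdots$, whose sizes sum to $\taille{m_P}$. Treating the intervals of $\Decompb(\ACT(\overleftarrow{P}))$ in interval order in lock-step with those of $\Decomp(P)$ is legitimate because, by the construction in the proof of Theorem~\ref{th:link:bwt:xbw}, $\acro{BWT\_XBW}$ maps the $k$-th interval of $\Decomp(P)$ (in interval order) to the $k$-th interval of $\Decompb(\ACT(\overleftarrow{P}))$ (in interval order). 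For each interval $\inter{i}{j}$ of $\Decomp(P)$, the algorithm scans $\BWT(P)[i,j]$, emits each symbol the first time it is encountered --- using one boolean array indexed by $\Sigma$ that is reset at the start of the interval --- appends these $m$ distinct symbols to $\XBWT(\ACT(\overleftarrow{P}))$, and appends $0^{m-1}1$ to $\XBWL(\ACT(\overleftarrow{P}))$.

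The correctness argument has three ingredients, all drawn from Theorem~\ref{th:link:bwt:xbw}. Fix $u=\inter{i}{j}\in\Decomp(P)$, let $\inter{i'}{j'}=\acro{BWT\_XBW}(u)$, let $z$ be the common parent in $\ACT(\overleftarrow{P})$ of the nodes $\PA(\ACT(\overleftarrow{P}))[i'],\ldots,\PA(\ACT(\overleftarrow{P}))[j']$, let $x_1\lhd_{P.\sigma}\cdots\lhd_{P.\sigma}x_m$ be the children of $z$, and let $y_1\lhd_{P.\sigma}\cdots\lhd_{P.\sigma}y_{\#(u)}$ be the leaves below $z$. (i) Since $\ACT(\overleftarrow{P})$ is a trie, distinct children of $z$ bear distinct labels, so $\delta[x_1],\ldots,\delta[x_m]$ are pairwise distinct; moreover Theorem~\ref{th:link:bwt:xbw} gives $\BWT(P)[i,j]=\delta[\acro{Child}_{\ACT(\overleftarrow{P})(y_1)}(z)]\cdots\delta[\acro{Child}_{\ACT(\overleftarrow{P})(y_{\#(u)})}(z)]$, and since every child of $z$ has at least one leaf below it, the symbols occurring in $\BWT(P)[i,j]$ are exactly $\delta[x_1],\ldots,\delta[x_m]$. (ii) The first occurrence of $\delta[x_k]$ in $\BWT(P)[i,j]$ sits at the position of the $\lhd_{P.\sigma}$-smallest leaf lying below $x_k$. (iii) By definition, $\lhd_{P.\sigma}$ on the children of $z$ is exactly the order of their minimum leaves, so from $x_1\lhd_{P.\sigma}\cdots\lhd_{P.\sigma}x_m$ the minimum leaves of $x_1,\ldots,x_m$ are $\lhd_{P.\sigma}$-increasing; hence the first occurrences of $\delta[x_1],\ldots,\delta[x_m]$ appear in this order inside $\BWT(P)[i,j]$. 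Combining (i)--(iii), the block emitted for $u$ is $\delta[x_1]\cdots\delta[x_m]$, which by Theorem~\ref{th:link:bwt:xbw} equals $\XBWT(\ACT(\overleftarrow{P}))[i',j']$, and the emitted $0^{m-1}1$ matches $\XBWL$ on $\inter{i'}{j'}$ by the definition of $\Decompb$. Since the two interval sequences are processed in matching order, concatenating the emitted blocks yields $\XBWT(\ACT(\overleftarrow{P}))$ and $\XBWL(\ACT(\overleftarrow{P}))$.

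For the running time, the sweep reads $\BWT(P)$ once in $O(\taille{m_P})$, the endpoint recovery from $\BWD(P)$ costs $O(\#(\Decomp(P)))$, and resetting the $\Sigma$-indexed boolean array once per interval costs $O(\#(\Decomp(P))\times\#(\Sigma))$; using $\taille{m_P}=\norme{P.S}+\#(P.S)$ and $\#(\Decomp(P))\leq\taille{m_P}$, everything sums to $O(\norme{P.S}\times\#(\Sigma))$. The main obstacle is precisely steps (ii)--(iii): when $P$ is not topologically planar, the leaves of distinct subtrees of $z$ interleave inside $\BWT(P)[i,j]$, so one cannot argue that equal symbols come in grouped runs; the proof must instead go through the minimum-leaf definition of $\lhd_{P.\sigma}$ on children to conclude that first-appearance order still reproduces the child order. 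Everything else is bookkeeping layered on top of Theorem~\ref{th:link:bwt:xbw}.
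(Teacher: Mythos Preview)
Your proposal is correct and follows the same approach as the paper, namely reducing everything to Theorem~\ref{th:link:bwt:xbw}; the paper's own proof is the single sentence ``Algorithm~\ref{algo:bwt:xbw} is an application of Theorem~\ref{th:link:bwt:xbw}.'' In fact you supply the one nontrivial detail the paper leaves implicit: that the first-occurrence order of symbols in $\BWT(P)[i,j]$ reproduces the $\lhd_{P.\sigma}$-order on the children of $z$ precisely because $\lhd_{P.\sigma}$ on children is defined via minimum leaves, so your write-up is strictly more complete.
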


\begin{proof}[Proof of Lemma~\ref{le:go:bwd:xbwd}]
Algorithm~\ref{algo:bwt:xbw} is an application of Theorem~\ref{th:link:bwt:xbw}.
\end{proof}

\begin{algorithm}[ht]
  \SetKwInOut{Input}{Input}
  \SetKwInOut{Output}{Output}
  \Input{The tables $\BWT(P)$ and $\BWD(P)$}
  \Output{The tables $\XBWT(\ACT(\overleftarrow{P}))$ and $\XBWL(\ACT(\overleftarrow{P}))$}

	$XBWT \leftarrow $ empty list\;
	$XBWL \leftarrow $ empty list\;

	$nb \leftarrow 1$\;

  \For{$i \in \inter{1}{\#(\BWD(P))}$}{
		$D \leftarrow$ Dictionnary such that for all $l \in \Sigma$, $D[l] \leftarrow true$\;
		 $begin \leftarrow nb$\;
		 $last \leftarrow begin+\BWD(P)[i]-1$\;
		 $nb \leftarrow last+1$\;

		\For{$j \in \inter{begin}{last}$}{
			\uIf{$D[\BWT(P)[j]]$}{
				add $\BWT(P)[j]$ to the end of $XBWT$ \;
   				add $0$ to the end of $XBWL$ \;
   				$D[\BWT(P)[j]] \leftarrow false$\;
  			}		
		}
		$XBWL[\#(XBWL)] \leftarrow 1$\;
		
  }
  \Return $XBWT$ and $XBWL$\;

  \caption{Computation of tables $\XBWT(\ACT(\protect\overleftarrow{P}))$ and $\XBWL(\ACT(\protect\overleftarrow{P}))$.\label{algo:bwt:xbw}}
\end{algorithm}

Using Algorithm~\ref{algo:bwd} to build $\BWD(P)$ and Algorithm~\ref{algo:bwt:xbw}, we can compute tables $\XBWT(\ACT(\overleftarrow{P}))$ and $\XBWL(\ACT(\overleftarrow{P}))$ in linear time of $\norme{P.S} \times \#(\Sigma)$.

\subsubsection*{From \XBW to \BWT}

We define the equivalent of \BWD for $\Decompb(P)$. The \emph{eXtended Burrows Wheeler Decomposition} (\XBWD) of a tree $\mathcal{T}$ is the array of length of $\XBWL(\mathcal{T})$ such that for each position $i$, $\XBWD(P)[i]$ equals the cardinality of the $i^{th}$ element of $\Decompb(\mathcal{T})$ sorted in interval order. 

\begin{lemma}\label{le:build:xbwd}
  From the table $\XBWL(\ACT(S))$, Algorithm~\ref{algo:xbwd} computes $\XBWD(\ACT(S))$ in linear time in $\norme{P.S}$ and the table $\XBWD(\ACT(S))$ can be stored in $\norme{P.S} \times log(\#(\Sigma))$ bits.
\end{lemma}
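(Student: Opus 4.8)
The proof is the mirror image of the proof of Lemma~\ref{le:build:bwd}: the bit array $\XBWL(\ACT(S))$ plays, for $\Decompb(\ACT(S))$, the role that the discrepancy pattern ``$\LCP(P)[\cdot]\neq\LRS(P)[\cdot]$'' played for $\Decomp(P)$, a bit equal to $1$ marking the right end of an interval.

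First I would prove correctness of Algorithm~\ref{algo:xbwd}. Recall that, by the definition of $\Decompb$ (read with the convention $\XBWL(\ACT(S))[0]=1$), an interval $\inter{i}{j}$ lies in $\Decompb(\ACT(S))$ exactly when $\XBWL(\ACT(S))$ is $0$ throughout $[i,j-1]$ and equals $1$ at both $i-1$ and $j$; hence the elements of $\Decompb(\ACT(S))$, taken in the interval order $<$, are exactly the successive runs of $\XBWL(\ACT(S))$ consisting of zeros terminated by a one. The algorithm scans $\XBWL(\ACT(S))$ left to right maintaining a counter, and I would establish, by induction on the scanned position, the invariant that at the start of iteration $i$ the counter records how many positions of the current, not-yet-closed run have been seen so far (equivalently, the number of $0$'s of $\XBWL(\ACT(S))$ between the previous $1$ and position $i$). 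Consequently, each time the scan reads $\XBWL(\ACT(S))[i]=1$, the run that closes at $i$ is precisely one element of $\Decompb(\ACT(S))$, its cardinality is obtained from the counter in $O(1)$, and the algorithm appends it and resets the counter; so the returned list is $\XBWD(\ACT(S))$.

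Then I would settle the complexity claims. Each iteration does $O(1)$ work, so the running time is $O(\taille{\XBWL(\ACT(S))})$; and $\taille{\XBWL(\ACT(S))}$ is the number of non-root nodes of the Aho-Corasick tree $\ACT(S)$, which is linear in $\norme{S}$, hence, as $P.S=S$, linear in $\norme{P.S}$. For the space, $\XBWD(\ACT(S))$ has one entry per $1$ in $\XBWL(\ACT(S))$, so at most $\taille{\XBWL(\ACT(S))}=O(\norme{P.S})$ entries; each entry is the out-degree of an internal node of $\ACT(S)$, which is at most $\#(\Sigma)+1$ since the children of a node carry pairwise distinct labels over $\Sigma$ (plus at most one $\$$-child), so each entry fits in $O(\log\#(\Sigma))$ bits, and the product is $\norme{P.S}\times\log(\#(\Sigma))$ bits up to the multiplicative constant and the additive $1$ inside the logarithm.

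There is no real obstacle: the statement is the exact counterpart of Lemma~\ref{le:build:bwd}, and the two facts about Aho-Corasick trees invoked above (node count linear in the norm, out-degree at most $\#(\Sigma)+1$) are standard. The only point that must be written with care is the identification of the successive zero-runs-terminated-by-a-one of $\XBWL(\ACT(S))$ with the elements of $\Decompb(\ACT(S))$, together with the off-by-one bookkeeping of the counter (whether the position that carries the closing $1$ is counted inside the run it closes), both of which are read straight off the definition of $\Decompb$.
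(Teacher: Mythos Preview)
Your proposal is correct and matches the paper's approach: the paper's proof simply states that it is ``similar to the proof of Lemma~\ref{le:build:bwd}'' and then justifies the space bound by observing that each entry $\XBWD(\ACT(S))[i]$ counts the right extensions of the corresponding node, hence is at most $\#(\Sigma)$. Your write-up is more detailed (you spell out the run-scanning invariant and are a bit more cautious with the $+1$ for the $\$$-child), but the argument is the same.
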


\begin{proof}
The proof of Lemma~\ref{le:build:xbwd} is similar to the proof of Lemma~\ref{le:build:bwd}. As for each position $i$ of $\XBWD(\ACT(S))$, $\XBWD(\ACT(S))[i]$ represents the number of the right extension of the strings $\PA(\ACT(S))[i]$ in $S$ (i.e. the number of different strings $\PA(\ACT(S))[i]\; a$ which are substrings of a string of $S$ with $a \in \Sigma$), we have $\XBWD(\ACT(S))[i] \leq \#(\Sigma)$. 
\end{proof}

\begin{algorithm}[ht]
  \SetKwInOut{Input}{Input}
  \SetKwInOut{Output}{Output}
  \Input{The table $\XBWL(\ACT(S))$}
  \Output{The table $\XBWD(\ACT(S))$}

	$XBWD \leftarrow $ empty list\;

    $j \leftarrow 1$\;
  \For{$i \in \inter{1}{\#(\XBWL(\ACT(S)))}$}{
	\uIf{$\XBWL(\ACT(S)) = 1$}{
   		add $j$ to the end of $XBWD$ \;
   		$j \leftarrow 1$\;
  	}
  	\Else{
    	$j \leftarrow j+1$\;
  	} 
  }
  \Return $XBWD$\;

  \caption{Computation of table $\XBWD(\ACT(S))$.\label{algo:xbwd}}
\end{algorithm}

\begin{lemma}
  Using tables $\XBWT(\ACT(S))$ and $\XBWD(\ACT(S))$, we can build the tables $\BWT(\overleftarrow{P})$, $\LCP(\overleftarrow{P})$ and $\LRS(\overleftarrow{P})$ in linear time of $\norme{P.S}\times \#(\Sigma)$ where $P$ is a  topologically planar, ordered set of strings such that $P.S = S$.
\end{lemma}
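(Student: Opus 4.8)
The plan is to run the construction of the direction \BWT\ to \XBW\ (Lemmas~\ref{le:build:bwd} and~\ref{le:go:bwd:xbwd}) backwards. Let $P$ be the topologically planar ordered set of strings with $P.S=S$ whose leaf order is the one carried by the given ordered tree $\ACT(S)$; then, by Theorem~\ref{th:link:bwt:xbw} applied to $\overleftarrow{P}$, the bijection $\acro{BWT\_XBW}$ identifies $\Decomp(\overleftarrow{P})$ with $\Decompb(\ACT(S))$. First I would read off the list of intervals of $\Decompb(\ACT(S))$ in interval order as the prefix sums of $\XBWD(\ACT(S))$, and build $\Rank$/$\Select$ structures on $\XBWT(\ACT(S))$, all in time $O(\norme{S})$.

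Next I would compute, for every node $x$ of $\ACT(S)$, the number $\ell(x)$ of leaves of $\ACT(S)$ below $x$, by a bottom-up traversal: $\ell(x)=1$ if $x$ is a leaf, and otherwise $\ell(x)$ is the sum of $\ell$ over the children of $x$, which form a single interval of $\Decompb(\ACT(S))$ reached by the standard \XBW\ navigation. These numbers are exactly the \BWT-side decomposition: the size of the $i$-th interval of $\Decomp(\overleftarrow{P})$ equals $\ell(z_i)$, where $z_i$ is the common parent of the nodes indexed by the $i$-th interval of $\Decompb(\ACT(S))$.

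The heart of the argument is to turn $\XBWT$ into $\BWT(\overleftarrow{P})$. Fix $u\in\Decomp(\overleftarrow{P})$, put $u'=\acro{BWT\_XBW}(u)$, let $z$ be the common parent of the nodes indexed by $u'$, and let $x_1 \lhd_{P.\sigma}\ldots\lhd_{P.\sigma} x_m$ be its children. Because $P$ is topologically planar, the leaf sets $\acro{Leaves}_{\ACT(S)}(x_1),\ldots,\acro{Leaves}_{\ACT(S)}(x_m)$ occur as consecutive blocks in the $\lhd_{P.\sigma}$-ordered list of $\acro{Leaves}_{\ACT(S)}(z)$; Theorem~\ref{th:link:bwt:xbw} then gives $\XBWT(\ACT(S))[u']=\delta[x_1]\ldots\delta[x_m]$ while $\BWT(\overleftarrow{P})[u]=\delta[x_1]^{\ell(x_1)}\ldots\delta[x_m]^{\ell(x_m)}$ (power notation). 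Hence $\BWT(\overleftarrow{P})[u]$ is the ``expansion'' of $\XBWT(\ACT(S))[u']$ obtained by repeating the $k$-th symbol $\ell(x_k)$ times, and concatenating these expansions over all intervals in interval order produces the whole array $\BWT(\overleftarrow{P})$. From $\BWT(\overleftarrow{P})$ I would compute $\LF(\overleftarrow{P})$ in the usual way and $\LRS(\overleftarrow{P})$ with Algorithm~\ref{algo:lrs} (correct by Lemma~\ref{le:lrs}); for $\LCP(\overleftarrow{P})$ I would reconstruct $m_{\overleftarrow{P}}$ by walking $\LF(\overleftarrow{P})$ from a $\$$-row, compute $\SA(m_{\overleftarrow{P}})$ and $\LCP(m_{\overleftarrow{P}})$ in linear time, and set $\LCP(\overleftarrow{P})[i]=\min(\LCP(m_{\overleftarrow{P}})[i],\LRS(\overleftarrow{P})[i])$ as in Proposition~\ref{prop:linear:lrs}. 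Each step is linear in $\norme{S}$ except for the \XBW\ navigation and the $\Rank$/$\Select$ queries, so the whole procedure runs in $O(\norme{S}\times\#(\Sigma))=O(\norme{P.S}\times\#(\Sigma))$ time.

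The step I expect to be the main obstacle is the justification of the expansion identity $\BWT(\overleftarrow{P})[u]=\delta[x_1]^{\ell(x_1)}\ldots\delta[x_m]^{\ell(x_m)}$: it is the dual of the computation in the proof of Proposition~\ref{prop:leq:topo:good}, and making it precise means keeping the two orders $\prec$ and $\lhd_{P.\sigma}$ on the nodes of $\ACT(S)$, together with the reversal conventions linking $P$, $\overleftarrow{P}$ and $\ACT(S)$, exactly aligned, so that topological planarity genuinely forces the leaf blocks to be contiguous and listed in the children order of $z$. The remaining points ---correctness and linear time of the bottom-up leaf counting, and of the reconstruction of $m_{\overleftarrow{P}}$ together with its suffix array--- are routine.
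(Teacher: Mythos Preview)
Your approach is essentially the paper's approach, carried out in more detail. The paper's proof is three sentences: it invokes the constant-time parent/child navigation of Ferragina~\etal\ on the \XBW\ to compute a table $\acro{TL}(\ACT(S))$ giving the number of leaves below each node (exactly your $\ell(x)$), then says one ``finishes the proof using the results of Theorem~\ref{th:link:bwt:xbw} and an algorithm similar to Algorithm~\ref{algo:bwt:xbw}''. Your expansion identity $\BWT(\overleftarrow{P})[u]=\delta[x_1]^{\ell(x_1)}\cdots\delta[x_m]^{\ell(x_m)}$ is precisely what that ``similar algorithm'' computes---Algorithm~\ref{algo:bwt:xbw} collapses runs of identical symbols in each $\BWT$-block to get the $\XBWT$-block, and you are running it backwards using the leaf counts as multiplicities. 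The paper does not spell out the recovery of $\LCP(\overleftarrow{P})$ and $\LRS(\overleftarrow{P})$ at all, so your extra paragraph on reconstructing $m_{\overleftarrow{P}}$ via $\LF$ and then computing $\SA$/$\LCP$ in linear time is a genuine addition rather than a deviation; it is a reasonable way to fill that gap. Your stated ``main obstacle'' (aligning the orders so that topological planarity makes the leaf blocks contiguous) is exactly the content the paper hides behind its appeal to Theorem~\ref{th:link:bwt:xbw}; you are right that this is where the care is needed, but it is not a new difficulty beyond what Theorem~\ref{th:link:bwt:xbw} already handles.
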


\begin{proof}
  In~\cite{FerraginaLMM09}, Ferragina \etal  prove that with both tables $\XBWT(\ACT(S))$ and $\XBWD(\ACT(S))$ one can access in constant time the children and the parent in $\ACT(S)$. Hence, we can compute in linear time in $\norme{P.S}$, the table $\acro{TL}(\ACT(S))$, where in each position of $i$ we store the number of leaves in the subtree of the node $\PA(\ACT(S))[i]$. We finish the proof using the results of Theorem~\ref{th:link:bwt:xbw} and an algoritm similar to Algorithm~\ref{algo:bwt:xbw}.
\end{proof}


\subsection*{Proof of Lemma~\ref{le:equi:xbw:table}}

  Let $T'$ be an optimal solution of \mpt for $\XBWT(\ACT(P))$ and for $\Decompb(\ACT(P))$. By Theorem~\ref{th:link:bwt:xbw}, for each $\inter{i}{j} \in \Decompb(\ACT(P))$, the order of the symbols in $\XBWT(\ACT(P))[i,j]$ depends of the order on the children of the parent of $\PA(\ACT(P))[i]$. Hence, the choice of $T'$ corresponds to the choice of an order for each internal node of $\ACT(P)$ over all its children. As we can extend this type of order to an total order over the leaves of $\ACT(P)$, we can build $P'$ the ordered set of strings such that $P'.S = S^{\$}$ and $P'.\sigma(\pi(f_i)) = s_{i}$ with the order over the leaves of $\ACT(P)$ gives $f_1 < \ldots < f_{\#(S)}$ and $s_1 < \ldots < s_{\#(S)}$ are the strings of $S$ in lexicographic order.

\end{document}